\newcommand{\Tr}{\operatorname{Tr}}
\newcommand{\id}{\operatorname{id}}
\newcommand{\AC}{\operatorname{AC}}
\newcommand{\KMS}{\operatorname{KMS}}
\newtheorem{theorem}{Theorem}[section]
\theoremstyle{definition}
\newtheorem{definition}[theorem]{Definition}
\newtheorem{example}[theorem]{Example}
\def\a{\alpha}
\def\b{\beta}
\def\k{\kappa}
\def\r{\rho}
\def\s{\sigma} 
\def\th{\theta} \def\Th{\Theta}
\def\t{\tau}
\def\ca{{\mathcal A}}
\def\cb{{\mathcal B}}
\def\ce{{\mathcal E}}
\def\cp{{\mathcal P}}
\def\cq{{\mathcal Q}}
\def\car{{\mathcal R}}
\def\cx{{\mathcal X}}
\def\cy{{\mathcal Y}}
\begin{document}
\title[Quantum detailed balance via elementary transitions]{Quantum detailed balance via elementary transitions}
\author{Rocco Duvenhage}
\orcid{0000-0003-3893-1196}
\email{rocco.duvenhage@up.ac.za}
\author{Kyle Oerder}
\orcid{0000-0003-0683-1748}
\author{Keagan van den Heuvel}
\orcid{0000-0002-6940-0534}
\affiliation{Department of Physics, University of Pretoria, Pretoria 0002, South Africa}

\begin{abstract}
Quantum detailed balance is formulated in terms of elementary transitions, in
close analogy to 
detailed balance in a classical Markov chain
on a finite set of points. 
An elementary transition is taken to be a pure state of two copies of the quantum system, 
as a quantum analogue of an ordered pair of classical points representing a classical
transition from the first to the second point. This form of
quantum detailed balance is shown to be equivalent to standard quantum
detailed balance with respect to a reversing operation, thus providing a new
conceptual foundation for the latter.
Aspects of parity in quantum detailed balance are clarified in the process.
The connection with the Accardi-Cecchini dual and the KMS dual (or Petz recovery map) is also elucidated.
\end{abstract}

\maketitle


%

\section{Introduction}


Quantum detailed balance has been approached in a number of
different ways over the decades (for example \cite{Dir, Tol, Dem, Ag, CW, Al, KFGV, M, SQV, GL, MS, AI, FU, TKRWV, AFQ, AFQ2, RLC, Ts}),
resulting in the development of a fair amount of successful theory and applications. 
However, the mathematical definitions of quantum detailed balance have tended to be quite abstract compared to the simplicity of
detailed balance in a classical Markov chain, and their physical meaning have 
not been as direct as in the classical case. 
See for instance \cite[Section I]{RLC} and \cite[Section IV.C]{Diehl15} for discussion of this point. 
A very good introductory exposition of the type of arguments involved, typically using adjoints or duals of channels or generators of quantum Markov semigroups, can be found in \cite{V}. 
This abstract approach has made it difficult to give
clear conceptual reasons why one definition should be preferred over
another. 

This paper addresses the lack of a simple direct physical or conceptual
motivation for a 
formulation of quantum detailed balance analogous to that for detailed balance in a
classical Markov chain on a set $\cx$ of points. These points are
the (pure) states of the system being studied. 
Detailed balance for the Markov chain simply requires the net transfer of
probability between any two points to be zero at each step in time. I.e., at
each step in time, the ``flow'' of probability from any point $i$ in the set
of states $\cx$, to another $j$, is the same as in the opposite direction
from $j$ to $i$. 
For a discrete set of points, this is expressed as
\begin{equation}
	\label{klasfb}
\rho_{i} \tau_{ij} = \rho_{j} \tau_{ji},
\end{equation}
for all $i$ and $j$ in $\cx$, where $\rho_{i}$ is the probability that the system is in the state $i$ at the current time, 
while if the system does happen to be in the state $i$,
then $\tau_{ij}$ is the probability for the system to transition to state $j$
during the next step in time.

\pagebreak

Thus far there does not appear to be a similarly direct argument for the quantum case. Instead the analogous arguments are phrased in the average, or in terms of expectation values, rather than using a direct quantum version of point to point transitions as in the classical case above. 

Our aim is to give a formulation of quantum detailed balance in an exact analogy to the classical argument above.
Write the transition from $i$ to $j$ as the ordered pair $(i,j)$,
which can be thought of as a pure state in the composite system consisting of two copies of the set of states $\cx$. 
We extend this to the quantum world by taking the quantum analogue of a point to point transition in a classical system to be any pure state of the composite
system consisting of two copies of the quantum system. 
In this context these pure states will be
referred to as ``elementary transitions'' (Section \ref{AfdEO}).
Any quantum channel can be expressed in terms of such elementary transitions. 


The basic notions of detailed balance can be expressed in terms of elementary transitions in analogy to the classical case in terms of classical transitions. 
A quantum version of the classical condition (\ref{klasfb}) can therefore be given (Section \ref{AfdEofb}). 
This offers a new and more direct motivation for
quantum detailed balance. 
We'll refer to this form of quantum detailed balance as elementary transition detailed balance, abbreviated as ETDB.


In the existing literature on the other hand, much evidence has accumulated on pragmatic grounds in favour of what is known as standard quantum detailed balance with respect to a reversing operation (also called SQDB-$\theta$) as arguably the best quantum extension of detailed balance. 
See in particular the series of seminal papers
\cite{FU, FR10, FU2012, FR, FR15b, FU2012b}, as well as \cite{RLC},
which show that this condition simply works exceedingly well in a number of ways,
in terms of both its implications and its natural, though abstract, mathematical form.
In addition, \cite{RLC} also motivates SQDB-$\theta$ physically through the idea of a hidden time-reversal symmetry.
Nevertheless, none of this literature appears to provide a solid motivation for SQDB-$\theta$ from first principles or from a clear conceptual or physical foundation, akin to the simple idea contained in (\ref{klasfb}) for point to point transitions in a classical Markov chain.

What we'll show (Section \ref{AfdDuaal}) is that our new formulation of quantum detailed balance, ETDB, is somewhat surprisingly equivalent to
SQDB-$\theta$. 
This is done by expressing ETDB in terms of a dual of a channel obtained by reversing its elementary transitions, and showing that this dual is the same as the Accardi-Cecchini dual \cite{AC}, thus connecting our approach to the more abstract approach through duals underlying SQDB-$\th$ in the above mentioned literature.
We argue that this singles out SQDB-$\th$ as the
best form of quantum detailed balance from a foundational stance, due to ETDB's motivation. 
This confirms the above mentioned papers'
conclusion that SQDB-$\theta$ is in many ways the best formulation of quantum detailed balance, but now from the simple clear foundational 
point of view that lead to ETDB.

Moreover, we'll use our ETDB setting to clarify the role of the reversing operation $\theta$ in SQDB-$\theta$ (Section \ref{AfdPariteit}). 
In the literature $\theta$ is often associated specifically to parity, for example a sign change in momentum in the classical detailed balance conditions for particle systems. 
However, we show that there are two aspects to $\th$, mathematically boiling down to a simple factorization of it into two operations, one being transposition with respect to an orthonormal basis in which the density matrix of the system is diagonal. This transposition does not represent any form of parity, but is instead better viewed as a mathematical part of the duality theory involved.
The other operation in the mentioned factorization can then be viewed as representing parity.

In this paper we focus on quantum systems with finite dimensional Hilbert
spaces and discrete time. 
In this way we can place more emphasis on the conceptual aspects. 
In line with this, the Appendix treats some mathematical points related to Sections 4 and 5, to make the paper as self-contained as possible. 
It consists of a brief review of the Accardi-Cecchini (AC) dual in finite dimensions, and subsequently the closely related KMS dual (also known as the standard dual or Petz recovery map) and SQDB-$\th$. 



\section{An illustrative example}
\label{AfdInlVb}

Before we turn to the general development in Section \ref{AfdEO}, we outline the core idea of our  approach to quantum detailed balance in a simple example. To emphasize how it extends the classical case, we begin with a very simple classical Markov chain which can be thought of as two opposing cycles on three points, or states, labeled 
${1,2,3}$. 

A transition from point $i$ to point $j$ will be denoted as $( i , j )$. 
Denote the probability that this system on the three points is currently at point $i$, or in state $i$, by $\r_i$.
Write the current probability distribution as the row matrix
$$
\r
=
\left[
\begin{array}
	[c]{ccc}
	\r_1 & \r_2 & \r_3
\end{array}
\right]
$$
and let
$\t = [\t_{ij}]$
be the transition matrix, that is, 
$\t_{ij}$
is the probability that during the next step in the Markov chain, the system will transition from state $i$ to state $j$, assuming that it currently is in $i$.
The probability distribution after the next transition is then given by
$$
\s = \r \t
$$
in terms of usual matrix multiplication.

As mentioned in \eqref{klasfb}, detailed balance in a Markov chain is the condition
$\r_{i} \t_{ij} = \rho_{j} \tau_{ji},$
which simply states that the probability for each transition is the same as that of its opposite. If this is satisfied, it follows that $\s = \r$, that is, the probability distribution is invariant under the dynamics described by $\t$.

We now focus on the three transitions (1,2), (2,3) and (3,1), as well their reverse or opposite transitions (2,1), (3,2) and (1,3) respectively. We are therefore assuming that the probability for any other transitions are zero,
i.e.,
$$
\t_{11} = \t_{22} = \t_{33} = 0.
$$
There is a wide range of cases where this setup satisfies detailed balance, an obvious example being
$$
\r_1 = \r_2 = \r_3 = 1/3
\qquad\mbox{and}\qquad
\t_{ij} = \t_{ji} = 1/2
\quad\mbox{for}\quad
i \neq j.
$$
In this case each transition $(i,j)$ as well as its reverse $(j,i)$ have the same probability to occur, namely the product of the above mentioned probabilities $1/3$ and $1/2$, i.e., $1/6$.

A simple but key point in our approach is that we can view a transition 
$$
(i,j)
$$ 
as a pure state of the composite system consisting of two copies of the three point system. The action of the transition $(i,j)$ is that the reduction $i$ of the state $(i,j)$ on the left (i.e., to the first system), is taken to its reduction $j$ on the right.

If we rather consider two copies of a three level quantum system, then in Dirac notation we can write the corresponding transition, in a chosen basis
$\ket{1}$, $\ket{2}$, $\ket{3}$,
as
$$
\ket{i} \ket{j}.
$$
In this context we refer to such a state of the composite system, 
when used to represent a transition in the three level system, 
as an 
\emph{elementary transition}.

This opens up the possibility of viewing other pure states of the composite quantum system as elementary transitions of the three level system as well. This is still in analogy to the classical case, but in general includes states of the composite system not available in, or allowed by, the classical case. In particular, we'll allow entangled pure states as elementary transitions.
We are no longer restricted to merely the separable states 
$\ket{i} \ket{j}$.

In keeping with the cycle (1,2), (2,3) and (3,1) in the classical example above,
consider for instance the elementary transition 
$$
\ket{\psi} 
= 
\frac{1}{\sqrt{3}}
\left(
\ket{1}\ket{2} + \ket{2}\ket{3} + \ket{3}\ket{1}
\right)
$$
Assuming for the moment that this sensibly describes a simultaneous transition of 
$\ket{1}$ to $\ket{2}$, $\ket{2}$ to $\ket{3}$, and $\ket{3}$ to $\ket{2}$, 
then in order to have an analogue of detailed balance we need to include its reverse in the full dynamics.
Intuitively this reverse should clearly be taken as the elementary transition
$$
\ket{\psi'} 
= 
\frac{1}{\sqrt{3}}
\left(
\ket{2}\ket{1} + \ket{3}\ket{2} + \ket{1}\ket{3}
\right).
$$
For the purposes of this example, let's allow only these two elementary transitions in the dynamics of the system.

To construct the full dynamics of this system, we mix the two elementary transitions into a state of the composite system,
$$
\k 
= 
p \ket{\psi} \bra{\psi} 
+ 
(1-p) \ket{\psi'} \bra{\psi'}
$$
with $p$ interpreted as the probability of that the elementary transition
$\ket{\psi}$ 
will take place, and $1-p$ that for 
$\ket{\psi'}$.
Note that these probabilities $p$ and $1-p$ are analogous to the probabilities 
$\r_{i} \t_{ij}$
in the classical case above.

We view the state $\k$ as representing the full range of elementary transitions permitted by the dynamics of our system. 
Put differently, $\k$ is taken as a complete description of the dynamics of the system. 
Note that the two elementary transitions appearing in the dynamics of the  system are orthogonal. In this sense they are completely distinct states with no overlap, and correspondingly as elementary transitions we'll view them as being independent 
parts of the system's dynamics.

We still need to explain the quantum analogues of the probability distributions $\r$ and $\s$ above in this context though.
Heuristically we can view $\ket{\psi}$ as an action taking its reduction (as a state) on the left, to its reduction on the right, in analogy to the classical transitions. 
Expanding this idea to $\k$, we view the left reduction $\r$ of $\k$ being taken to its right reduction $\s$. 
In this example, both these density matrices are simply
the maximally mixed state
$$
\r = \s = \frac{1}{3} I_3 ,
$$
as is easily verified.

Unlike the case of a classical transition $(i,j)$, however, this reduction view does not give the full picture of the inner workings of the elementary transitions. Indeed, the reductions of the elementary transitions 
$\ket{\psi}$ and $\ket{\psi'}$
are themselves both the maximally mixed state. 
In particular, the cyclic nature of the elementary transitions is not reflected in this reduction view. Nor is the action of $\k$ on states other than the maximally mixed clear yet. 
We return to this point below.


But first, as a simple illustration of detailed balance in this setup, we naturally weight the two elementary transitions equally, that is we construct the dynamics as
$$
\k 
= 
\frac{1}{2} 
\ket{\psi} \bra{\psi} + \frac{1}{2} \ket{\psi'} \bra{\psi'} .
$$
Since we have two elementary transitions, the one being the reverse of the other, with the same probability $1/2$ to occur, we view this as a form of quantum detailed balance, which we term \emph{elementary transition detailed balance}, or ETDB.

This is of course in analogy to the six elementary transitions in the classical example appearing in pairs being each other's reverse and having the same probability 1/6.

The elementary transitions
$\ket{\psi}$ and $\ket{\psi'}$,
being entangled,
are not possible in the classical case. Moreover, despite the density matrix $\k$ having just one nonzero eigenvalue $1/2$,
it cannot be written as a mixture of orthogonal pure non-entangled states. 
Indeed, all the pure states in the Hilbert space spanned by 
$\ket{\psi}$
and
$\ket{\psi'}$
are entangled, as can be confirmed.

The invariant state involved in the case of ETDB described above, is of course the maximally mixed state, as this is the reductions of $\k$ as pointed out. 
But as mentioned, we currently lack a complete description of what $\k$ and its elementary transitions do.


However, the full action of $\k$ and its elementary transitions can be expressed in terms of completely positive maps, making the connection to the usual representation of dynamics. This is possible by using the standard connection between channels and states, the Choi-Jamio{\l}kowski duality. Indeed, in the special case 
$p = 1/2$
considered here, we can simply note that the unitary operator
$$
U
=
\left[
\begin{array}
	[c]{ccc}
	0      &   0     &  1       \\
	1       &   0     &  0      \\
	0      &     1    &  0
\end{array}
\right].
$$
leads to the elementary transition $\ket{\psi} \bra{\psi}$ through the usual Choi-Jamio{\l}kowski duality, 
$$
\ket{\psi} \bra{\psi}
=
\frac{1}{3}\sum_{ij}
\ket{i}  \bra{j}  \otimes \ket{i + 1} \bra{j + 1} 
=
\frac{1}{3}\sum_{ij}
\ket{i}  \bra{j}  \otimes ( U \ket{i} \bra{j} U^\dagger ) ,
$$
(where $ \ket{4} = \ket{1} $)
while $U^\dagger$ similarly corresponds to
$\ket{\psi'} \bra{\psi'}$.
As one may have expected, $U$ (and similarly $U^\dagger$) is a unitary operator cycling through 
$\ket{1}$, $\ket{2}$ and $\ket{3}$.
Since it is unitary, it describes a fully fledged quantum channel of the three level system corresponding to the single elementary transition $\ket{\psi}$. This is again very different from the point to point transitions in the classical case, as none of the latter act on the system's state space 
$\{ 1, 2, 3 \}$
as a whole.
Neither does the elementary transition $\ket{1}\ket{2}$ for example, but in general the quantum elementary transitions may.

Thus we now have complete descriptions of the actions of the two elementary transitions on arbitrary states, 
as well as of $\k$, which corresponds via the Choi-Jamio{\l}kowski duality to the channel given by 
$$
\ce (\eta) = \frac{1}{2} U \eta U^\dagger + \frac{1}{2} U^\dagger \eta U
$$
for all density matrices $\eta$ of the three level system.

Examples of quantum detailed balance of this cycle form, though not viewed from the elementary transition point of view, have appeared in the literature. See in particular \cite[Section 6]{AFQ}, \cite{BQ}, \cite[Section 5]{FR10} and \cite[Section 7]{FR}.



\section{The Choi-Jamio{\l}kowski duality and elementary transitions}
\label{AfdEO}

We begin by setting up a framework for elementary transitions in a quantum system and more generally between two quantum systems, as well as how to decompose a channel into elementary transitions.
Elementary transitions in the quantum case are pure states of a composite quantum system,  and can include entangled states. 
Because of this they are not as simple as classical point to point transitions. 
In the classical case one has transitions between pure states of the system, or even of two different systems. In the quantum case, however, the reductions of an elementary transition to the two systems respectively, need not be pure states.
Consequently, elementary transitions need not go from one pure state to another. 
It bears repeating that it's the elementary transition itself which is a pure state of the composite system, in analogy to the classical case.

In order to express channels in terms of elementary transitions, 
we are going to make use of the Choi-Jamio{\l}kowski (CJ) duality \cite{deP, J, C} (also see \cite{JLF} for a review). 
We focus on Choi's formulation \cite{C},
although in a more general form which will be discussed shortly. 
This section in part reviews ideas from \cite[Section II]{D} in this more general context.
It sets the stage and lays down some notation and terminology for the sequel.

\subsection{The Choi-Jamio{\l}kowski map}
\label{OndAfdCJ}

The CJ duality associates a state $\kappa$, i.e., a density matrix, of a
composite system $AB$, to a channel $\mathcal{E}$ from the system $A$ to the
system $B$. We'll refer to this as the \emph{CJ map}, mapping $\mathcal{E}$ to
$\kappa$. The CJ map in its usual formulation it is given by
\begin{equation}
	\label{spesCJ}
\k
=
\frac{1}{m}\sum_{ij}
\ket{i^A}  \bra{j^A}  \otimes \ce(\ket{i^A} \bra{j^A}  ),
\end{equation}
where $\ket{1^A}  ,...,\ket{m^A}  $ is any orthonormal basis for the Hilbert
space $H_{A}$ of the system $A$. Note that in this form $\kappa$ reduces to
the maximally mixed state of $A$. Keep in mind that a channel is a completely
positive (c.p.) trace preserving linear map. When restricted to the
density matrices of $A$, it maps to the density matrices of $B$.


We'll in fact use a more general form of the CJ map. Given a state $\rho$
of $A$, i.e., a density matrix on the Hilbert space $H_{A}$, choose an
orthonormal basis $\ket{1^A}  ,...,\ket{m^A}  $ in terms of which it is
diagonal,
\[
\r
=
\r_{1} \ket{1^A} \bra{1^A}  +...+ \r_{m} \ket{m^A} \bra{m^A}  ,
\]
and set
\[
\ket{\psi} = \sum_{i=1}^{m} \r_{i}^{1/2} \ket{i^A} \ket{i^A}
\]
where 
$\ket{i^A} \ket{i^A}$ 
is the usual shorthand for 
$\ket{i^A} \otimes \ket{i^A}$. 
That is, $\ket{\psi}$ is a purification of $\r$. The more
general form of the CJ map we need is given relative to $\r$ by
\begin{equation}
	 \label{CJ}
\k
=
\id \otimes\ce(\ket{\psi}  \bra{\psi}  )
=
\sum_{ij}
\r_{i}^{1/2} \r_{j}^{1/2} 
\ket{i^A}  \bra{j^A}  \otimes \ce( \ket{i^A}  \bra{j^A} ),
\end{equation}
where $\id$ denotes the identity map. 
Note that in this case $\k$ reduces to the states 
$\r$ of $A$ and $\s = \ce( \r )$ of $B$ respectively, 
reflecting the fact that $\k$ represents a channel that takes $\r$ to $\s$.
The reason for using (\ref{CJ}) instead of (\ref{spesCJ}) will become clear in
the Section \ref{AfdEofb}, when we start our discussion of detailed balance.

\subsection{Elementary transitions}

\label{OndAfdEO} Diagonalize the state $\k$ in (\ref{CJ}) as
\begin{equation}
\label{CJ-ontb}
\k
=
\sum_{\a=1}^{N}
p_{\a}\k_{\a}
\end{equation}
where $N\leq mn$, the $\k_{\a}$'s are orthogonal pure states of the composite
system $AB$ expressed as density matrices, and the $p_{\alpha}$'s are
probabilities with
\[
p_{\alpha}>0.
\]
Note that we of course implicitly assume we've chosen our labeling such that each $\kappa_{\alpha}$ appears
only once in (\ref{CJ-ontb}), that is
\[
\kappa_{\alpha}\neq\kappa_{\beta}
\]
when $\alpha\neq\beta$. We view the pure states $\kappa_{\alpha}$ as
representing \emph{elementary transitions} from $A$ to $B$ which constitute
the channel $\mathcal{E}$.

An elementary transition is analogous to a transition from one point in a finite classical set
$\cx = \{1,...,m\}$ 
to a point in
$\cy = \{1,...,n\}$. 
Keep in mind that a Markov chain
on a finite set of points is given by the special case $m=n$. We'll consider
the classical case further as part of the next subsection.

The CJ map (\ref{CJ}) taking $\mathcal{E}$ to $\kappa$ is indeed a duality
when $\rho$ is invertible, as the CJ map is then invertible, giving a
one-to-one correspondence between the set of channels from $A$ to $B$, and the
set of states of $AB$ reducing to $\rho$ of $A$. The derivation of this\ form
of the CJ duality is essentially the same as for the usual formulation, as it
is straightforward to verify that the inverse can be expressed as
\[
\ce( \ket{\varphi}  \bra{\chi}  )
=
( \bra{\chi} \otimes I_B )
( \r^{-1/2} \otimes I_B )
\k^{TA}
( \r^{-1/2} \otimes I_B )
( \ket{\varphi}  \otimes I_B ),
\]
for any pure states $\ket{\varphi}  ,\ket{\chi}  \in H_{A}$ of $A$, where $TA$
denotes the partial transpose with respect to the basis $\ket{1^A}
,...,\ket{m^A}  $ for $H_{A}$, and $I_{B}$ is the identity operator on $B$'s
Hilbert space.

As $\kappa$ in (\ref{CJ}) is therefore a way of representing $\mathcal{E}$, we
can view (\ref{CJ-ontb}) as a way of decomposing $\mathcal{E}$ into elementary
transitions. In the sequel (\ref{CJ-ontb}) will be referred to as a \emph{CJ
decomposition} of $\mathcal{E}$ relative to $\rho$, even when $\rho$ is not
invertible. We continue to think of $\kappa$ as a (not necessarily
faithful) representation of $\mathcal{E}$, despite the fact that the
CJ map is then not necessarily invertible. 

This is a variation on a standard decomposition of a channel (see \cite{AP}).
Note that the diagonalization of $\kappa$ is in general not unique, since one
can choose different orthonormal bases in an eigenspace of dimension greater
than one, hence a CJ decomposition of a channel is not in general unique.

Again assuming that $\rho$ is invertible, the decomposition can also be
written as
\begin{equation}
\label{ontb}
\ce
=
\sum_\alpha p_\alpha \varepsilon_\alpha,
\end{equation}
where the c.p. maps $\mathcal{\varepsilon}_{\alpha}:L(H_{A})\rightarrow
L(H_{B})$ are defined in accordance with the CJ duality as
\begin{equation}
\label{ontbEO}
\varepsilon_\a (\ket{i^A}  \bra{j^A}  )
=
( \bra{j^A}  \otimes I_{B} )
( \r^{-1/2} \otimes I_{B} )
\k_\a ^{TA}
( \r^{-1/2} \otimes I_{B} )
( \ket{i^A} \otimes I_{B} ),
\end{equation}
$L(H_A)$ being the space of linear operators from $H_A$ to itself and
similarly for $L(H_{B})$. These maps $\varepsilon_{\alpha}$ can be viewed as
the elementary transitions, but for our purposes the representation as pure
states $\kappa_{\alpha}$ is more convenient because of the more direct
comparison to classical point to point transitions below. 
Although the $\varepsilon_{\alpha}$'s are seen to be completely positive by inverting (\ref{ontbEO}) to get it in the same form as (\ref{CJ}) and applying Choi's criterion \cite{C}, 
they are not necessarily channels.
In fact, an $\varepsilon_\a$ can be trace increasing, and hence need not be a quantum operation, though $p_\a \varepsilon_\a$ clearly always is, as it is trace non-increasing because of (\ref{ontb}).

When we are interested in detailed balance in a given system, i.e., for a
channel from a system to itself, we'll of course consider two copies of the
same system $A$, that is to say $B=A$. But the more general setting in terms
of two systems above clarifies the setup for CJ duality and will also be used
in Section \ref{AfdDuaal} to obtain a general form of a certain dual of a channel.

\subsection{An interpretation of elementary transitions}

\label{OndAfdInterpVanEO}

In order to understand elementary transitions better, let's consider the
classical case. We can denote a classical transition from point $i$ in the set
$\cx = \{1,...,m\}$ 
to $k$ in 
$\cy = \{1,...,n\}$ 
as the ordered pair $(i,k)$.
Consequently such a transition can mathematically equivalently be viewed as a
pure state of the composite system $\cx\cy$ consisting of $\cx$ and $\cy$, with the
cartesian product $\cx \times \cy$ as its set of pure states.

Denote the transition probability from $i$ to $k$ by $\tau_{ik}$. We embed
this into a quantum representation by considering $m$ and $n$ dimensional
Hilbert spaces $H_A$ and $H_B$ respectively, choosing any orthonormal
bases
\[
\ket{1^A}  ,...,\ket{m^A}  \text{ \ \ and \ \ }\ket{1^B}  ,...,\ket{n^B}
\]
for them, and defining a quantum channel by
\begin{equation}
	\label{klasInbed}
\ce( \ket{i^A}  \bra{j^A} )
=
\delta_{ij}\sum_{k=1}^{n}
\tau_{jk} \ket{k^B} \bra{k^B}  ,
\end{equation}
for all $i,j=1,...,m$. It is straightforward to check that $\ce$ is
trace preserving and completely positive using Choi's criterion \cite{C}
(which is essentially part and parcel of the usual CJ duality), that is,
$\ce$ is indeed a channel.

Secondly, consider a classical probability distribution
\[
\rho_{1},...,\rho_{m}
\]
over the set $\cx$. Use this to define the density matrix
\[
\rho=\sum_{i=1}^{m}\rho_{i}\ket{i^A}  \bra{i^A}
\]
to be used in the CJ map (\ref{CJ}). It is immediate from $\mathcal{E}$'s
definition that $\mathcal{E}(\rho)$ is a diagonal matrix $\sigma$ whose
diagonal entries with respect to the basis $\ket{k^B}  $,
\[
\sigma_{k} = \mathcal{E}(\rho)_{kk} = \sum_{i}\rho_{i}\tau_{ik},
\]
indeed constitute the new classical probability distribution after one step in
the Markov chain. Zero probabilities are allowed, hence $\rho$ and the CJ map
need not be invertible.

We then obtain a CJ decomposition
\begin{equation}
\label{klasCJ}
\kappa 
= 
\sum_{ik} \rho_{i}\tau_{ik} \ket{i^A}  \bra{i^A}
\otimes
\ket{k^B}  \bra{k^B}  
= 
\sum_{ik}\rho_{i}\tau_{ik}\ket{i^A}  \ket{k^B}
\bra{i^A}  \bra{k^B}
\end{equation}
of $\mathcal{E}$, expressed in terms of the elementary transitions
\begin{equation}
	\label{KwantVoorstVanKlasEO}
\k_{(i,k)} 
=
\ket{i^A} \bra{i^A} \otimes \ket{k^B} \bra{k^B}  
=
\ket{i^A} \ket{k^B} \bra{i^A} \bra{k^B}
\end{equation}
weighted by the probabilities
\begin{equation}
	\label{klasOrgWaarsk}
p_{(i,k)} = \r_{i}\tau_{ik}.
\end{equation}
Here $\a = (i,k)$ is a more convenient descriptive  label than the labeling used in (\ref{CJ-ontb}).
This quantum representation of the classical transition $(i,k)$ as the pure
state $\ket{i^A}  \ket{k^B}  $ of the system $AB$, is a simple illustration of
pure states of $AB$ as elementary transitions.

The appearance of $\rho_{i}\tau_{ik}$ is the reason we use (\ref{CJ}) rather
than (\ref{spesCJ}). That is, we work relative to the probability distribution
being considered. Keep in mind that $\rho_{i}\tau_{ik}$ is indeed the
probability for the Markov chain to transition from $i$ to $k$ during the next
step in time, assuming that $\rho_{i}$ is the probability that it is currently
in the state $i$. Put differently, $p_{(i,k)}$ is the probability that the
elementary transition $(i,k)$ takes place.

Returning to the quantum case of Subsection \ref{OndAfdEO}, any quantum pure
state $\kappa_{\alpha}$ of $AB$ appearing in (\ref{CJ-ontb}), is analogous to
a transition $(i,k)$ of the classical composite system.
The transition $(i,k)$ corresponds to the state $\ket{i^A}  \ket{k^B}  $ of
$AB$ in the quantum setting, i.e., to the (quantum) elementary transition
displayed in (\ref{KwantVoorstVanKlasEO}). The quantum case, however, allows
for types of elementary transitions not present in the classical case, in
particular entangled $\kappa_{\alpha}$. The classical reasoning above is in
line with the natural interpretation of (\ref{CJ-ontb}) as saying that
$p_{\alpha}$ is the probability that the elementary transition $\kappa
_{\alpha}$ will take place when the channel $\mathcal{E}$ is applied.

As for the meaning of an elementary transition, one expects that
it should be analogous to a classical transition $(i,k)$ from
its reduction $i$ on the first system, to its reduction $k$ on the second
system. In the quantum case, however, the reductions of $\k_{\a}$ need
not be pure states, making a quantum elementary transition more involved.

From (\ref{ontbEO}), namely,
\[
\varepsilon_\a ( \ket{i^A} \bra{j^A} ) 
= 
\r_{i}^{-1/2} \r_{j}^{-1/2}
(\bra{j^A}  \otimes I_{B}) \kappa_\a ^{TA} (\ket{i^A}  \otimes I_{B}),
\]
it is easily confirmed that $\kappa_{\alpha}$'s reduction to $B$ is given by
\[
\kappa_{\alpha}^{B} = \varepsilon_{\alpha}(\rho),
\]
hence in terms of the reductions $\kappa_{\alpha}^{A}$ and $\kappa^{A}$ of
$\kappa_{\alpha}$ and $\kappa$ to $A$,
\[
\varepsilon_\beta ( p_\alpha \k_{\alpha}^{A} ) 
\leq
\varepsilon_\beta
\left(  
\sum_\gamma p_\gamma \k_{\gamma}^{A} 
\right)  
=
\varepsilon
_{\beta}(\kappa^{A}) 
= 
\varepsilon_{\beta}(\rho) 
= 
\kappa_{\beta}^{B},
\]
simply because $\varepsilon_{\beta}$ is a positive map and $\kappa$ reduces to
$\rho$ on $A$. In this sense the state $\kappa_{\alpha}^{A}$ of $A$, when
weighted by the probability $p_{\alpha}$, is mapped ``into'' the state
$\kappa_{\beta}^{B}$ of $B$ by the elementary transition $\varepsilon_{\beta}$.

It is straightforward to check that the same is true in the classical
case. Since the reductions of $\kappa_{(i,k)}$ to $A$ and $B$ give pure states
(namely $i$ and $j$ respectively), we however have the more precise result
\[
\varepsilon_{(i,k)} ( \rho_{i} \ket{i^A}  \bra{i^A}  ) = \ket{k^B}  \bra{k^B}
\text{\ \ \ and \ \ \ } \varepsilon_{(i,k)} ( \ket{j^A}  \bra{j^A}  ) = 0
\]
for every $i$ and any $j \neq i$, in the classical case. The quantum case
above generalizes this, as the reductions of the elementary transitions in the
quantum case are in general not pure.
As pointed out in Section \ref{AfdInlVb}, in the latter case the reductions don't reflect the inner workings or action of the elementary transition in full.


%
%


A complementary discussion regarding elementary transitions can be
found in \cite[Section II]{D}, albeit in the context of the special case
(\ref{spesCJ}) of the CJ map.

\section{Quantum detailed balance defined through elementary transitions}

\label{AfdEofb}

Our goal is to give a conceptually well motivated
definition of quantum detailed balance in terms of elementary transitions.
We consider the case
\[
m=n
\]
in Section \ref{AfdEO}. The classical detailed balance condition
can be expressed in the context of that section in terms of elementary transitions. 
That gives a clear way of extending detailed balance to
the quantum case in terms of elementary transitions in Subsection
\ref{OndAfdDefEofb}. This will be referred to as elementary transition
detailed balance, or ETDB.
As we've pointed out, the CJ decomposition leading to the elementary transitions being used is in not in general unique. Subsection \ref{OndAfdOnafhVanCJ} treats the independence of ETDB from the CJ decomposition being used.
Subsequently, in Subsection \ref{OndAfdInv}, we'll
prove invariance of the state under the channel for this form of quantum detailed balance, as an illustration of the elementary transition approach.
The state $\r$ of our quantum system need not be an invertible operator in this section, 
but it will become necessary in Subsection \ref{OndAfdDuaal&FB} when we work in terms of dualities.

\subsection{Defining ETDB}
\label{OndAfdDefEofb}

A Markov chain on the set $\cx = \{1,...,m\}$ described by the transition
probabilities $\tau_{ij}$, satisfies detailed balance relative to the
probability distribution $\rho_{1},...,\rho_{m}$ exactly when
\[
\rho_{i}\tau_{ij} = \rho_{j}\tau_{ji}
\]
for all $i,j$. Clearly then, this classical detailed balance condition simply
expresses the equality of the probabilities $p_{(i,j)} = \rho_{i}\tau_{ij}$
and $p_{(j,i)} = \rho_{j}\tau_{ji}$ of the elementary transitions
$\kappa_{(i,j)}$ and $\kappa_{(j,i)}$ respectively in the CJ decomposition
(\ref{klasCJ}) of the channel $\mathcal{E}$ relative to $\rho$. It indeed
still merely directly states that the net transfer of probability between any
two points in the state space $\cx$ is zero at each step in time, which was our
initial formulation of detailed balance in a Markov chain in the Introduction.

Returning to the quantum framework set up in Section \ref{AfdEO}, but with
\[
A = B,
\]
and keeping in mind that a CJ decomposition (\ref{CJ-ontb}) need not be
unique, we extend the notion of detailed balance as follows.

As in the classical case, where we had $\kappa_{(i,j)}$ versus $\kappa
_{(j,i)}$, we have to reverse the direction of an elementary transition in the
quantum case. This is easily done by exchanging the two copies of the system.
To do this, we are going to use the map $\mathcal{R}$ on $L(H) \otimes L(H)$ given through
\[
\mathcal{R}(X \otimes Y) = Y \otimes X
\]
for any operators $X$ and $Y$ from $H$ to itself. Here
\[
H = H_{A}
\]
is the Hilbert space of our quantum system $A$ for which we want to define
detailed balance, and $L(H)$ is the space of linear operators from $H$ to
itself as before. 

Note that for any elementary transition $\k_\a$ appearing in the CJ decomposition (\ref{CJ-ontb}), we can interpret $\car(\k_\a)$ as the elementary transition in the opposite direction.
Keeping this in mind, we can now extend the elementary transition setting for
detailed balance above to the quantum case:

\begin{definition}
\label{eofb}
The channel $\ce$ is said to satisfy \emph{elementary
transition detailed balance} (or \emph{ETDB}) relative to $\r$ if
there is a CJ decomposition of $\ce$ relative to $\r$, 
as in (\ref{CJ-ontb}), such that for
each $\a$ there is a $\b$ satisfying
\begin{equation}
	\label{eofbVwd}
\car(\k_{\a})
=
\k_{\b}\text{ \ \ and \ \ }p_{\a}
=
p_{\beta}. 
\end{equation}
\end{definition}


As $\mathcal{R}$ is its own inverse, $\kappa_{\alpha}$ and $\kappa_{\beta}$
(and consequently also $\alpha$ and $\beta$) in (\ref{eofbVwd}) determine each
other uniquely, just like $(i,j)$ and $(j,i)$ in the classical case. On the
other hand, due to the possible non-uniqueness of CJ decompositions mentioned
in Section \ref{AfdEO}, the definition of ETDB may appear to depend on which
decomposition we use. It will be seen in Subsection \ref{OndAfdOnafhVanCJ} below that in practice this is
not the case.

But first note that ETDB perfectly contains the usual classical case, since in
the latter
\[
\k_{(i,j)} = \ket{i} \bra{i} \otimes \ket{j} \bra{j} 
\]
as in (\ref{KwantVoorstVanKlasEO}), giving
\[
\car( \k_{(i,j)} ) = \k_{(j,i)},
\]
with $\b = (j,i)$ indeed being the opposite transition to $\a = (i,j)$ for each such $\a$, and 
$p_{(i,j)} = p_{(j,i)}$ in (\ref{eofbVwd}) being the classical detailed balance condition by (\ref{klasOrgWaarsk}).

Although Definition \ref{eofb} is a very direct extension of the classical
case as we have just seen, there is a more succinct way of stating it.
This alternative statement of Definition \ref{eofb} will be convenient in
the next subsection to show that ETDB does not depend on the CJ decomposition being used. 
Furthermore, it will open the door to a subsequent more abstract
characterization of ETDB in Section \ref{AfdDuaal} in terms of a certain
channel dual to $\ce$, 
leading to the equivalence of ETDB to standard
quantum detailed balance with respect to a reversing operation studied in previous literature.
We provide and prove this second statement of ETDB in the next result.


\begin{theorem}
\label{eofbKar} The channel $\ce$ satisfies ETDB relative to $\r$ if and only if
\[
\car(\k) = \k.
\]
\end{theorem}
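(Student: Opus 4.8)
The statement is an "if and only if" between a decomposition-level condition (ETDB: there exists a CJ decomposition pairing each $\kappa_\alpha$ with a $\kappa_\beta=\mathcal{R}(\kappa_\alpha)$ of equal weight) and a single operator equation ($\mathcal{R}(\kappa)=\kappa$).

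For the **easy direction** (ETDB $\Rightarrow$ $\mathcal{R}(\kappa)=\kappa$): pick a CJ decomposition $\kappa=\sum_\alpha p_\alpha\kappa_\alpha$ as in the definition. Apply $\mathcal{R}$ (which is linear) to get $\mathcal{R}(\kappa)=\sum_\alpha p_\alpha\mathcal{R}(\kappa_\alpha)$. The pairing $\alpha\mapsto\beta$ from (\ref{eofbVwd}) is an involution on the index set (since $\mathcal{R}$ is its own inverse and the $\kappa_\alpha$ are distinct), and $p_\alpha=p_\beta$, so reindexing the sum by $\beta$ gives back $\sum_\beta p_\beta\kappa_\beta=\kappa$.

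For the **harder direction** ($\mathcal{R}(\kappa)=\kappa$ $\Rightarrow$ ETDB): start from *any* diagonalization $\kappa=\sum_\alpha p_\alpha\kappa_\alpha$ with distinct orthogonal pure states $\kappa_\alpha$ and $p_\alpha>0$; this is automatically a CJ decomposition. The key observation is that $\mathcal{R}$ is a $*$-automorphism of $L(H)\otimes L(H)$ that is moreover an involution and (being a "swap") preserves the trace and positivity, hence maps pure states to pure states, preserves orthogonality, and preserves spectra. So $\mathcal{R}(\kappa)=\kappa$ together with $\kappa=\sum_\alpha p_\alpha\mathcal{R}(\kappa_\alpha)$ gives two diagonalizations of the same operator $\kappa$, with the same eigenvalue $p_\alpha$ attached to $\mathcal{R}(\kappa_\alpha)$ as was attached to $\kappa_\alpha$. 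The issue is that when an eigenvalue $p$ of $\kappa$ has multiplicity $>1$, $\mathcal{R}$ need not permute the chosen eigenvectors $\kappa_\alpha$ within that eigenspace — it only maps the eigenspace to itself. The fix is to re-choose the orthonormal basis inside each eigenspace so that $\mathcal{R}$ does permute it: on the eigenspace $V_p\subseteq L(H)\otimes L(H)$ for eigenvalue $p$, $\mathcal{R}|_{V_p}$ is a self-adjoint involution, so $V_p$ decomposes into its $+1$ and $-1$ eigenspaces under $\mathcal{R}$; the $+1$ part has an orthonormal basis of $\mathcal{R}$-fixed rank-one projections (these are genuine pure states since they lie in $V_p$ and $\mathcal{R}$ fixes them, and a fixed rank-one projection stays rank-one), while on the $-1$ part one picks an orthonormal basis and pairs $v$ with $\mathcal{R}(v)$ — here one must check that $v$ and $\mathcal{R}(v)$ are orthogonal (immediate, since $\mathcal{R}(v)=-v$ would contradict $v$ being a state... more precisely one works with suitable real/imaginary combinations so that the resulting rank-one projections are swapped in pairs). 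This re-chosen basis gives a CJ decomposition of $\ce$ relative to $\r$ in which every $\kappa_\alpha$ is either fixed by $\mathcal{R}$ (take $\beta=\alpha$) or paired with a distinct $\kappa_\beta=\mathcal{R}(\kappa_\alpha)$ of equal weight $p_\alpha=p_\beta$ — which is exactly ETDB.

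**Main obstacle.** The delicate point is the last one: producing, inside a degenerate eigenspace, an orthonormal basis of *rank-one projections* (i.e. legitimate pure states) that is genuinely permuted by $\mathcal{R}$. Splitting $V_p$ into the $\pm1$ eigenspaces of the involution $\mathcal{R}|_{V_p}$ is routine, but one must be careful that the projections built on the $-1$ eigenspace really pair up under $\mathcal{R}$ (one typically passes to $\mathcal{R}$-invariant two-dimensional real subspaces and chooses the basis there) and that everything assembled remains an orthonormal basis of the full eigenspace consisting of rank-one projections. Once this linear-algebra lemma about self-adjoint involutions on a finite-dimensional Hilbert space of operators is in hand, the rest is bookkeeping.
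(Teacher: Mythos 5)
Your forward direction is correct and matches the paper's. The converse, however, contains a genuine gap, and it sits exactly at the point you yourself flag as the ``main obstacle.''

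The construction you sketch operates on the superoperator $\car$ acting on operator space, and it does not go through there. The eigenspaces of $\k$ live in $H\otimes H$, not in $L(H)\otimes L(H)$. If by $V_p$ you mean the span of the $\kappa_\alpha$ with $p_\alpha=p$ inside $L(H\otimes H)$, then (i) $\car$ need not map $V_p$ into itself (it sends $\ket{\psi_\alpha}\bra{\psi_\alpha}$ to the projection onto $R\ket{\psi_\alpha}$, which lies in the corresponding eigenspace of $\k$ but generally outside the $d$-dimensional span of the originally chosen projections), and (ii) the only rank-one projections contained in $V_p$ are the original $\kappa_\alpha$ themselves, since a combination $\sum_\alpha a_\alpha\kappa_\alpha$ of projections onto mutually orthogonal vectors is idempotent only when each $a_\alpha\in\{0,1\}$ --- so there is no freedom to ``re-choose an orthonormal basis of rank-one projections'' inside $V_p$. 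If instead $V_p$ is meant to be the full operator space over the eigenspace, then the $-1$ eigenspace of the involution $\car$ consists of operators $v$ with $\car(v)=-v$; since $\car$ preserves positivity, no nonzero positive operator, hence no pure state, lies there, so ``pairing $v$ with $\car(v)$'' cannot produce elementary transitions. Your parenthetical about ``suitable real/imaginary combinations'' is precisely the step that is missing, and it cannot be completed in the form stated.

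The missing idea is to descend to the Hilbert-space level. The map $\car$ is implemented as $\car(Z)=RZR$ by the self-adjoint swap unitary $R$ on $H\otimes H$ defined by $R\ket{\varphi}\ket{\chi}=\ket{\chi}\ket{\varphi}$, so $\car(\k)=\k$ says $R\k=\k R$. Each eigenspace $K$ of $\k$ in $H\otimes H$ is therefore $R$-invariant, and since $R^2=I$ one can choose an orthonormal basis of $K$ consisting of eigenvectors $\ket{\psi_\alpha}$ of $R$ with eigenvalues $\pm1$. The resulting CJ decomposition then satisfies $\car(\kappa_\alpha)=R\ket{\psi_\alpha}\bra{\psi_\alpha}R=(\pm1)^2\ket{\psi_\alpha}\bra{\psi_\alpha}=\kappa_\alpha$, so every elementary transition is its own reverse and Definition \ref{eofb} holds with $\beta=\alpha$; no pairing of distinct pure states is needed at all. (The paper remarks afterwards that one may optionally recombine a $+1$ and a $-1$ eigenvector within an eigenspace to produce genuinely swapped pairs, mirroring the classical $(i,j)\leftrightarrow(j,i)$ picture, but that is an aside, not part of the proof.)
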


\begin{proof}
Assuming that ETDB is indeed satisfied, it follows immediately from (\ref{CJ-ontb}) and (\ref{eofbVwd}), and the fact that $\alpha$ and $\beta$ determine each other uniquely, that
\[
\car(\k)
=
\sum_\a p_{\a} \car( \k_\a )
=
\k.
\]
Conversely, assume that $\mathcal{R}(\kappa)=\kappa$. Since $\mathcal{R}$ is
implemented through
\[
\mathcal{R}(Z) = RZR
\]
for all operators $Z$ from $H\otimes H$ to itself, 
where $R:H\rightarrow H$ is the self-adjoint unitary operator given by
$$
R\ket{\varphi} \ket{\chi} = \ket{\chi} \ket{\varphi},
$$
we have $\k R = R\k$.
It follows that the eigenspaces of $\kappa$ are invariant under $R$, hence $R$
can be restricted to an operator $K\rightarrow K$ for any eigenspace $K$ of
$\kappa$. For each eigenspace of $\kappa$ we can therefore choose an
orthonormal basis consisting of eigenvectors of $R$. Denoting the density
matrices corresponding to these eigenvectors by $\kappa_{\alpha}$ for
$\alpha=1,...,n^{2}$, and letting $p_{\alpha}$ be the eigenvalue of $\kappa$
corresponding to the eigenspace of $\kappa$ that $\kappa_{\alpha}$ is
associated to, we find
\[
\k = \sum_{\a=1}^{m^2} p_\a \k_\a.
\]
By dropping the terms with $p_{\alpha}=0$, and then relabeling, we indeed
obtain
\[
\kappa=\sum_{\alpha=1}^{N}p_{\alpha}\kappa_{\alpha}
\]
for some $N\leq m^2$ and $p_{\alpha}>0$ for all $\alpha$. As $R^2 = I$, $R$ has only
the eigenvalues $1$ and $-1$, hence we have 
$R\kappa_{\alpha}R
=
R\ket{\psi_\alpha}\bra{\psi_\alpha}  R^{\dagger}
=
\ket{\psi_\alpha}  \bra{\psi_\alpha}
=
\kappa_{\alpha}$, 
where $\ket{\psi_\alpha}  $ is the eigenvector of $R$
corresponding to $\kappa_{\alpha}$ above. That is, 
$\mathcal{R}(\kappa_{\alpha})=\kappa_{\alpha}$ 
for all $\alpha$, which is sufficient for Definition \ref{eofb} to apply. 
\end{proof}

While Definition \ref{eofb} explains the meaning of ETDB by explicitly showing
each elementary transition, similar to the classical case, the restatement of
it in Theorem \ref{eofbKar} essentially just presents this definition in a
notationally more compact form. It still simply states ETDB in terms of the
reverse of elementary transitions,
but in a joint form representing the whole channel, rather than in an
individual form as in Definition \ref{eofb}. This reverse of the channel as a
whole via $\kappa$ forms the basis for the dual of a channel to be studied in
Section \ref{AfdDuaal}.

This second statement of ETDB is therefore still analogous to the classical
condition $\rho_{i}\tau_{ij}=\rho_{j}\tau_{ji}$ involving the reverse of the
transitions $(i,j)$. Nevertheless, notice that in the proof of the converse
above, the result $\mathcal{R}(\kappa_{\alpha})=\kappa_{\alpha}$, which says
that each elementary transition in the CJ decomposition is its own reverse, is
of course not exclusively what is encountered in the classical case.
Classically we typically also have cases where $\mathcal{R}(\kappa_{\alpha
})=\kappa_{\beta}$ and $p_{\alpha}=p_{\beta}$, but $\alpha\neq\beta$,
corresponding to the transition $(i,j)$ and its reverse $(j,i)$, for $i\neq
j$. The quantum case can be written in this form as well though.

Indeed, for any eigenspace $K$ of $\kappa$ for which $R$'s restriction
$R|_{K}$ to $K$ has both the eigenvalues $1$ and $-1$, we can start with
eigenvectors $\ket{\varphi}  $ and $\ket{\chi}  $ of $R|_{K}$ with eigenvalues
$1$ and $-1$ respectively, and form two new orthonormal eigenvectors
$\ket{\psi}  $ and $\ket{\omega}  $ of $\kappa$ in $K$ by normalizing
\[
\ket{\varphi}  +\ket{\chi}  \text{ \ \ and \ \ }\ket{\varphi}  -\ket{\chi}
\]
respectively. Then indeed $\ket{\psi}  \neq\ket{\omega}  $ and $R\ket{\psi}
=\ket{\omega}  $, giving $\mathcal{R}(\kappa_{\alpha})=\kappa_{\beta}$ in
terms of $\kappa_{\alpha}=\ket{\psi}  \bra{\psi}  $ and $\kappa_{\beta
}=\ket{\omega}  \bra{\omega}  $, while $p_{\alpha}=p_{\beta}$, as they are the
same eigenvalue of $\kappa$. We can repeat this in the orthogonal complement
of $\ket{\psi}  $ and $\ket{\omega}  $ in $K$, as long as there are
eigenvectors for both $R$'s eigenvalues left in it. This outcome mirrors the
typical situation in the classical case mentioned above.

This again demonstrates that in the quantum case there is some freedom
in how the elementary transitions are chosen, whereas in the classical case
the canonical choice consists of pairs of points $(i,j)$. When embedding the
classical case into the quantum case as was done in Subsection
\ref{OndAfdInterpVanEO}, we could analogously explore the same freedom of
choice in the classical case when different elementary transitions
$\kappa_{(i,j)}$ in (\ref{klasCJ}) have the same coefficient, that is, when
$\kappa$ has eigenspaces of dimension larger that $1$. This of course happens
in particular when the detailed balance condition $\rho_{i}\tau_{ij}=\rho
_{j}\tau_{ji}$ is satisfied for at least some $i\neq j$. In the latter case
one could choose a pair of orthonormal vectors other than $\ket{i}  \ket{j}  $
and $\ket{j}  \ket{i}  $ in the subspace spanned by them to serve as
elementary transitions. They would no longer be the usual classical
transitions $(i,j)$ and $(j,i)$, but would be legitimate elementary
transitions in our broader setting.



\subsection{Independence of ETDB from the CJ decomposition}
\label{OndAfdOnafhVanCJ}

Since there may be multiple CJ decompositions of a given channel $\ce$ relative to a given state $\r$, 
we would like to verify that the definition of ETDB does not
depend on the CJ decomposition being used. This will be done below. 
A basic problem in this regard is that, unlike the classical case
(\ref{klasCJ}) where for every elementary transition in a channel $\k_{(i,j)}$ the opposite
(or reverse) elementary transition $\car( \k_{(i,j)} ) = \k_{(j,i)}$
is available to potentially also appear in the CJ decomposition, in the quantum case this need not be so. 
The reverse of some
elementary transition $\k_{\a}$ in (\ref{CJ-ontb}) may simply not be available given the choices made in (\ref{CJ-ontb}).

Let's begin by giving a simple example to illustrate this, before returning to
our general quantum setup. Consider the quantum channel $\mathcal{E}$ on the
space of $2\times2$ matrices to itself given by
\[
\ce(X) = \frac{1}{2} \Tr(X) I_{2}
\]
and the usual CJ map (i.e., relative to the maximally mixed state), giving
\[
\kappa
=
\frac{1}{4} I_2 \otimes I_2
=
\frac{1}{4}\sum_{\alpha=1}^{4}
\kappa_{\alpha},
\]
with all four (the maximum available for $m = 2$) elementary transitions having non-zero probability $1/4$.
The latter CJ decomposition can be chosen with $\kappa_{1},...,\kappa_{4}$
respectively being the projections onto the orthonormal vector states
\[
	\ket{11} = \ket{1} \ket{1}  \text{, }  \ket{22} = \ket{2} \ket{2}, 
\]
\[
	a \left( \ket{1} \ket{2} + c\ket{2} \ket{1} \right)  
	\text{, }
    b \left( \ket{1} \ket{2} - \frac{1}{c^*}\ket{2} \ket{1} \right),
\]
where $\ket{1}  $ and $\ket{2}  $ are the usual coordinate basis vectors, and $a$,
$b$ and $c$ are non-zero complex numbers. 
Then the reverses $\car(\k_3)$
and $\car(\k_4)$ of $\k_3$
and $\k_4$ do not appear in this CJ decomposition (i.e., are not equal
to any of $\kappa_{1},...,\kappa_{4}$) precisely when $c$ is not purely
imaginary and $c\neq\pm1$, as is easily verified.

For condition (\ref{eofbVwd}) (for all $\alpha$) in Definition \ref{eofb} to
be a possibility, it has to be in terms of a CJ decomposition for which the reverse $\car(\k_\alpha)$ of every elementary transition $\k_\alpha$ in the CJ decomposition
is also in principle available 
to appear in that same CJ decomposition. 
More precisely, (\ref{CJ-ontb}) must have the property that the set of $\k_\a$'s is closed under $\car$, or that we can choose an orthonormal basis for the eigenspace of $\k$ corresponding to $0$, such that the resulting extended set of $\k_\a$'s, including not just those corresponding to
$p_\a > 0$, but also to $p_\a = 0$,
is closed under $\car$.
We'll call a CJ decomposition \emph{complete} if it has this property. 
In the example above this is not the case.

If (\ref{eofbVwd}) holds for any (necessarily
complete) CJ decomposition, then it holds for all complete CJ decompositions of $\ce$ relative to $\r$,
hence ETDB is independent of which complete CJ decomposition is used in Definition
\ref{eofb}.
To see this, keep in mind that if (\ref{eofbVwd}) holds for some CJ decomposition, then
$\car(\k) = \k$. For any complete CJ decomposition (\ref{CJ-ontb}) of $\ce$ relative to $\r$, we then have $p_{\alpha}=p_{\beta}$ when 
$\car(\k_{\alpha}) = \k_{\beta}$ 
(such a $\beta$ necessarily existing for every $\alpha$ by
virtue of completeness), since the $\kappa_{\alpha}$'s are orthogonal
projections and therefore linearly independent. I.e., (\ref{eofbVwd}) indeed holds for
any complete CJ decomposition.

\subsection{Invariance}

\label{OndAfdInv}

A simple and expected result regarding ETDB is the invariance 
$\ce(\r) = \r$. 
A enlightening aspect here is that one can prove this invariance
by an argument that mirrors the classical derivation. 
This illustrates how elementary transitions and their reverses in the quantum case can be viewed and treated in a way somewhat analogous to classical transitions and
their reverses in a Markov chain. 
Unlike the
classical case where the transitions are of the form $(i,j)$, there is more
freedom of choice in the elementary transitions in the quantum case. Because
of this the elementary transitions are handled
collectively via $\k$ in the arguments below. 
We'll derive invariance in two ways, one involving $\ce$ explicitly, the other purely in terms of elementary transitions without reference to $\ce$.
The latter illustrates the advantage a purely elementary transition approach may have in some contexts.

But first consider the simple argument leading to invariance in the classical
case. Writing the classical probability distribution and the transition
probabilities as a row matrix and square transition matrix
\[
\rho=\left[
\begin{array}
[c]{rrr}%
\rho_{1} & \cdots & \rho_{n}
\end{array}
\right]  \text{ \ \ and \ \ }\tau=\left[  \tau_{ij}\right]  ,
\]
we find that the $j$'th entry of the row $\rho\tau$ is given by
\begin{equation}
(\rho\tau)_{j}=\sum_{i=1}^{n}\rho_{i}\tau_{ij}=\sum_{i=1}^{n}\rho_{j}\tau
_{ji}=\rho_{j}, \label{klasInv}
\end{equation}
assuming the detailed balance condition $\rho_{i}\tau_{ij}=\rho_{j}\tau_{ji}$,
since any row in the transition matrix adds up to $1$ (all the probability at
a point must flow somewhere during a step of the Markov chain). That is,
$\rho\tau=\rho$, which is the invariance of $\rho$ under $\tau$ in the
classical case.

Returning to the quantum setup of Subsection \ref{OndAfdDefEofb}, let's see
how the simple classical argument above is reflected in the quantum argument
in terms of elementary transitions. First, let's set it up. Note that from
(\ref{CJ}) we have
\[
( \bra{i}  \otimes I ) \k (\ket{j}  \otimes I) 
=
\r_{i}^{1/2} \r_{j}^{1/2} \ce( \ket{i}  \bra{j} ),
\]
hence
\[
\ce(\r)
=
\sum_{i}
\r_{i} \ce(\ket{i}  \bra{i}  )
=
\sum_{i}
( \bra{i}  \otimes I ) \k (\ket{i} \otimes I ),
\]
and consequently
\[
\bra{j}  \ce(\r) \ket{k}  
=
\sum_{i}
\bra{i}  \bra{j}  \k \ket{i} \ket{k} .
\]

Now, to prove invariance, note from the definition of $\mathcal{R}$ that
\[
\bra{i}  \bra{j}  \kappa\ket{i}  \ket{k}  =\bra{j}  \bra{i}  \mathcal{R}
(\kappa)\ket{k}  \ket{i}  .
\]
Assuming ETDB in Definition \ref{eofb} is satisfied, 
then as in the proof of Theorem \ref{eofbKar} it leads immediately to 
$\car(\k) = \k$, 
from which it follows that
\[
\bra{j}  \mathcal{E}(\rho)\ket{k}  
=
\bra{j}  
\left(  
\sum_{i}
( I \otimes \bra{i}  )\k (I \otimes \ket{i}  )
\right)  
\ket{k}  
=
\bra{j} \Tr_{2}(\k) \ket{k}  ,
\]
i.e., $\ce(\r) = \Tr_{2}(\k)$, 
where $\Tr_{2}$ here denotes the
partial trace with respect to the second copy of $A$. Note that this is
analogous to the step
\[
\sum_{i=1}^{n}\rho_{i}\tau_{ij}=\sum_{i=1}^{n}\rho_{j}\tau_{ji}
\]
in the classical derivation (\ref{klasInv}) of invariance above. It only
remains to calculate
\begin{align*}
\Tr_{2}(\k)  
&  =
\sum_{i}
( I \otimes \bra{i}  )
\left(  
\sum_{jk}
\rho_{j}^{1/2}\rho_{k}^{1/2} \ket{j}  \bra{k}  \otimes \ce( \ket{j}\bra{k} )
\right)  
( I \otimes\ket{i}  )\\
&  =
\sum_{jk}
\rho_{j}^{1/2}\rho_{k}^{1/2}\ket{j}  \bra{k}  
\left(  
\sum_{i}
\bra{i}  \ce( \ket{j}  \bra{k}  )\ket{i}  
\right) \\
&  =
\sum_{jk}
\rho_{j}^{1/2}\rho_{k}^{1/2}\ket{j}  \bra{k}  
\Tr
\left(
\ce(\ket{j}  \bra{k}  )
\right) \\
&  =\rho,
\end{align*}
as $\ce$ is trace preserving (being a channel). This
is analogous to the step
\[
\sum_{i=1}^{n}\rho_{j}\tau_{ji}=\rho_{j}
\]
in the classical case (\ref{klasInv}). In conclusion, we have shown that
\begin{equation}\label{Inv}
\mathcal{E}(\rho)=\rho,
\end{equation}
as required.

The derivation above is already insightful with regards to the nature of elementary transitions, 
in particular their similarity to transitions in a classical Markov chain. 
As an indication of the usefulness of a purely elementary transition point of view in certain situations, without reference to $\ce$ as above, we also point out that the invariance is almost trivially obtained from Theorem \ref{eofbKar} through the following one line argument:
\begin{equation}\label{InvKort}
\Tr_1 (\k) = \Tr_1 (\car(\k)) = \Tr_2 (\k).
\end{equation}
Here $\Tr_1$ of course denotes the partial trace over the first copy of $A$, while the second equality is immediate from the definition of $\car$ as swapping the systems. 
Given that $\k$ is set up to take $\r$ to $\s = \ce(\r)$, as in Subsection \ref{OndAfdCJ}, equation \eqref{InvKort} simply states that the reduction $\s$ of $\k$ to the second copy of $A$ (by $\Tr_1$) is the same as its reduction $\r$ to the first copy. I.e., the same invariance as obtained in \eqref{Inv}.
Keep in mind that the ETDB balance condition in Definition \ref{eofb}, 
as well as Theorem \ref{eofbKar} and its proof, don't refer directly to $\ce$ either.

The invariance result will be of use in Subsection \ref{OndAfdDuaal&FB}.

\section{Dual channels and standard quantum detailed balance}
\label{AfdDuaal}


Various forms of quantum detailed balance have been proposed and studied. One
form which has been argued to be particularly physically relevant and useful is standard quantum
detailed balance with respect to a reversing operation $\theta$, or
SQDB-$\theta$ for short. 

The papers
\cite{FU, FR10, FU2012, FR, FR15b, FU2012b} comprise the initial important work making this point.
More recently \cite{RLC} have also argued in favour of SQDB-$\theta$ through the mathematically equivalent (for a specific $\th$) notion of hidden time-reversal symmetry (HTRS), and  by demonstrating that it is broad and flexible
enough to lead to useful results and techniques for quantum systems. 
Also see \cite{BQ2, DS, DS2, DSS} for more work in this vein. 

There does not seem to be a clear origin of this detailed balance
condition in the literature, and it is considered to be part of the folklore
of quantum detailed balance. 
The first place a standard detailed balance condition appeared in print, but not with respect to a reversing operation, is probably \cite{DF}, and for the case with respect to a reversing operation it seems to be \cite{FU}.

It appears that SQDB-$\theta$ is not well motivated in the literature from a foundational or first principles point of view in the spirit of the simple meaning (\ref{klasfb}) of detailed balance in a classical Markov chain, along the lines of ETDB above.
Rather, it is motivated on the basis of its consequences and implications, or in the case of \cite{RLC}, from the concept of HTRS which is still not as direct an analogue of point to point transitions in a classical system as is ETDB. 

In this section we intend to cast light on the fundamental meaning of SQDB-$\theta$ by proving it to be equivalent to ETDB for a specific $\th$.
The relevant reversing operation $\th$ emerges automatically from our ETDB point of view to be the transposition with respect to our chosen basis in which $\rho$ is diagonal. 
For this reason we'll refer to SQDB-$\th$ with this $\th$ simply as SQDB.
The same $\th$ also emerges from HTRS in the setting of \cite{RLC}.
Transposition with respect to some basis is indeed a typical choice
made in papers on SQDB-$\theta$, but without completely convincing conceptual reasons, 
or a clear general explanation of its physical role.  

The equivalence between SQDB and ETDB then in effect motivates SQDB from our elementary transition point of view and makes its analogy to classical detailed balance much clearer. 
We consider this as strong further evidence in favour of viewing SQDB as a conceptually and physically sensible form of quantum detailed balance, 

Because of this line of argument, one sees that taking $\th$ to be the mentioned transposition does not involve any form of parity, for example negation of momentum when considering the opposite direction of a classical transition in a particle system. 
SQDB, being equivalent to ETDB, instead simply describes a quantum version of detailed balance where the the opposite direction of every elementary transition is considered in order to compare the flow of probability in the two directions, without any parity involved. 
In existing literature the reversing operation is to the contrary often assumed to necessarily be related to parity, even for this transposition. 

In the next section we'll turn to parity by extending ETDB to include parity, and subsequently showing its equivalence to SQDB-$\th$ for a general $\th$.
This will clarify how precisely parity is built into $\th$ and SQDB-$\th$.

This equivalence (with or without parity) in addition provides examples of ETDB through examples in the literature on SQDB-$\th$ and HTRS above. We'll leave the analysis of these examples through the lens of elementary transitions to later work.

%
%

In order to prove the equivalence of ETDB and SQDB, we need to develop some theory regarding dual channels in Subsection \ref{OndAfdDuaal}, 
as this is the framework in which SQDB is defined. 
Because of this the presentation in Subsection \ref{OndAfdDuaal} is necessarily somewhat more mathematical than previous sections.
Of particular relevance to SQDB is the Accardi-Cecchini (AC) dual
of a channel introduced in a much more general setting in \cite[Proposition 3.1]{AC}
(a concise review of it can also be found in \cite[Theorem 2.5]{DS2}). Some facts regarding the AC dual in our finite dimensional setting is relegated to the Appendix.

The central mathematical point to be shown in Subsection \ref{OndAfdDuaal} is that the AC dual $\ce^{\AC}$ of a channel $\ce$ is the same as a dual $\ce'$ which will be defined from the reverse $\car(\k)$ of the representation $\k$ of $\ce$ through the CJ map given by (\ref{CJ}).
The dual $\ce'$ in turn has a clear conceptual meaning in our elementary transition framework as the channel obtained from $\ce$ by replacing all elementary transitions appearing in it by their opposites. This allows ETDB to be easily expressed in terms of $\ce'$ and subsequently proven equivalent to SQDB in Subsection \ref{OndAfdDuaal&FB}.

\subsection{The dual}
\label{OndAfdDuaal}

In part for sake of generality, but more importantly for clarity, we now
return to a channel $\mathcal{E}$ from system $A$ to system $B$ as in Section
\ref{AfdEO}. We are going to develop a dual channel $\ce'$
from $B$ to $A$ which will allow an alternative characterization of ETDB in
the case $A=B$, and subsequently lead to SQDB. 
Developing $\ce'$ first for the general case of two systems will, however, give a
clearer conceptual picture of what is happening. We therefore expand on our
notation from Section \ref{AfdEofb} before we proceed to treat the dual of a channel.

Consider two Hilbert spaces $H_A$ and $H_B$ of dimensions $m$
and $n$ respectively, as in Section \ref{AfdEO}. Now define $\car$ to be the
map
\[
\car : L( H_A \otimes H_B ) \rightarrow L(H_B \otimes H_A)
\]
linearly extending the requirement
$\car(X \otimes Y) = Y \otimes X$, 
where 
$X \in L(H_{A})$ 
and
$Y \in L(H_{B})$. 

As in Subsection \ref{OndAfdDefEofb}
we can interpret $\car(\k_\a)$ as the elementary transition opposite to an elementary transition $\k_\a$ appearing in a channel $\ce$ from $A$ to $B$ 
via the CJ decomposition (\ref{CJ-ontb}). Because of this, when the density matrix $\ce(\r)$ is invertible, the resulting reverse 
$$
\car(\k)
=
\sum_{\a = 1}^{N}
p_\a \car( \k_\a )
$$ 
of $\k$ will be seen to represent a channel $\ce'$ from $B$ to $A$ in which each elementary transition appearing in $\ce$ via (\ref{CJ-ontb}) has been replaced by its opposite elementary transition. In this sense $\ce'$ is the reverse of $\ce$ on the level of elementary transitions (though not its inverse as a map).

To begin with, let $\ce$ be any positive (i.e., 1-positive) linear map from
$L(H_A)$ to $L(H_B)$. For the moment we don't yet assume that
$\mathcal{E}$ is a channel. We nevertheless set
\begin{equation}
	\label{CJweer}
\k
=
\sum_{ij}
\r_{i}^{1/2}\r_{j}^{1/2}
\ket{i^A}  \bra{j^A}
\otimes
\ce( \ket{i^A}  \bra{j^A} )
\end{equation}
in line with (\ref{CJ}), where $\r$ and 
$\ket{1^A}  ,..., \ket{m^A}$ 
are as in Subsection \ref{OndAfdCJ}.
The reverse of $\kappa$ will be denoted as
\begin{equation}
	\label{kap'}
\k'
=
\car(\k)
=
\sum_{ij}
\rho_{i}^{1/2}\rho_{j}^{1/2} \ce(\ket{i^A}  \bra{j^A}  ) \otimes \ket{i^A}  \bra{j^A} .
\end{equation}

We set
\[
\s = \ce(\r),
\]
which is positive,  $\s \geq 0$, since $\r$ and $\ce$ are, but not necessarily a state. In particular, it is selfadjoint, hence we can choose an orthonormal basis 
$\ket{1^B}  ,..., \ket{n^B} $ in which it is
diagonalized as
\[
\s = \s_1 \ket{1^B} \bra{1^B} +...+ \s_{n} \ket{n^B} \bra{n^B}  .
\]

Let's now assume that $\sigma$ is invertible, i.e.,
\[
\sigma_{k}>0
\]
for all $k$. This ensures that there is a uniquely determined linear map
$\ce'$ from $L(H_{B})$ to $L(H_{A})$ such that
\begin{equation}
	\label{kap'CJ}
\k'
=
\sum_{kl}
\sigma_{k}^{1/2}\sigma_{l}^{1/2}\ket{k^B}  \bra{l^B}
\otimes\ce'(\ket{k^B}  \bra{l^B}  ), 
\end{equation}
since the latter can be inverted as follows using (\ref{kap'}),
\begin{align}
\ce'  &  ( \ket{k^B} \bra{l^B} ) 
= 
\s_k^{-1/2}\s_l^{-1/2}(\bra{k^B} \otimes I_{A})\k'(\ket{l^B}  \otimes I_{A})
\nonumber\\
&  
=
\s_k^{-1/2}\s_l^{-1/2} \sum_{ij}
\r_{i}^{1/2}\r_{j}^{1/2} \bra{k^B}  
\ce( \ket{i^A}  \bra{j^A} ) 
\ket{l^B}  \ket{i^A}
\bra{j^A} 
\label{E'formule}\\
&  
=
\sum_{ij}
\Tr
\left(  
\ket{k^B}  \bra{l^B}  
\left(  
\sigma^{-1/2}\ce(\rho^{1/2}\ket{i^A}  \bra{j^A}  \rho^{1/2})\sigma^{-1/2}
\right)  ^{T}
\right)  
\ket{i^A}  \bra{j^A} 
\nonumber
\end{align}
where $T$ is transposition with respect to 
$\ket{1^B} ,...,\ket{n^B} $.
As opposed to $\k$ in (\ref{CJweer}) where we work relative to $\r$, in $\k'$ we work relative to $\s$, since the roles of $A$ and $B$ are swapped.
For any $Y$ in $L(H_{B})$ it follows by a few algebraic manipulations that
\begin{align}
	\label{duaal}
\ce'(Y)  
&  
=
\sum_{ij}
\Tr
\left(  
Y
\left(  
\sigma
^{-1/2}\ce(\rho^{1/2}\ket{i^A}  \bra{j^A}  \rho^{1/2})\sigma^{-1/2}
\right)  ^T
\right)  
\ket{i^A}  \bra{j^A} 
\nonumber\\
&  
= 
\r^{1/2}
\left[
\ce^\dagger(\s^{-1/2}Y^{T}\s^{-1/2}) 
\right] ^ T
\r^{1/2}, 
\end{align}
where the outer $T$ is transposition with respect to 
$\ket{1^A} ,..., \ket{m^A} $,
the inner $T$ still with respect to $\ket{1^B}  ,..., \ket{n^B} $,
and $\ce^\dagger$ is defined through 
$\Tr(X\ce^\dagger(Y)) 
= 
\Tr(\ce(X)Y)$ 
for all $X$ in $L(H_A)$ 
and $Y$ in $L(H_B)$.
This linear map $\ce'$ is called the \emph{dual} of
$\ce$ relative to $\r$, or simply the dual of $\ce$ when
$\r$ is clear from context. Since it corresponds to the reverse
$\k'$ of $\k$ via \eqref{kap'CJ}, 
in which every elementary transition $\k_\a$ constituting $\ce$ has been replaced by its opposite elementary transition $\car(\k_\a)$, we can also think of
$\ce'$ as the \emph{reverse} of $\ce$ (relative to $\r$) via elementary transitions.

In the literature reversibility of channels has been studied from the point of view of the question whether a channel exists that reverses the action of a given channel on a specified set of states; see for example \cite{Pet88, Jen, Sh}. 
Related work, also emphasizing the simplifying role that SQDB-$\th$ can play, appeared in \cite{Ts1}, while \cite{BK} focused on preservation of entanglement and other correlations.
Refer to the Appendix for a few further remarks.
However, here we constructed a specific channel $\ce'$ from an elementary transition point of view, that will connect directly to ETDB. 

Let's consider properties of $\ce'$, making additional assumptions about $\ce$ as needed. Firstly, since $\s = \ce(\r)$, it is easily verified from (\ref{duaal})
that
\[
\Tr(\ce'(Y))=\Tr(Y),
\]
that is, $\ce'$ is trace preserving. 

If $\ce$ is completely positive, then $\ce'$ is a channel, as can be verified
 by the following argument:
We only need to show that $\ce'$ is c.p., as we already know
that it is trace preserving. Because $\ce$ is completely positive, we
know from 
$\k = \id \otimes \ce( \ket{\psi}  \bra{\psi} )$ 
in
(\ref{CJ}) that $\k \geq 0$. Note that 
$\car = R (\cdot) R^\dagger$,
where 
$R : H_A \otimes H_B \rightarrow H_B \otimes H_A$ 
is defined by requiring
\[
R \ket{\varphi} \ket{\chi} = \ket{\chi} \ket{\varphi}  ,
\]
hence $\k' = R \k R^\dagger \geq 0$. Then by (\ref{kap'CJ}),
written as
\[
\k' 
= 
\sum_{kl} 
\ket{k^B}  \bra{l^B} \otimes \ce'( \s^{1/2} \ket{k^B} \bra{l^B} \s^{1/2} ),
\]
and Choi's criterion for complete positivity, we know that the map
$\mathcal{F} = \ce'( \s^{1/2} (\cdot) \s^{1/2})$ 
is completely positive. Hence its composition 
$\ce' = \mathcal{F} \circ \mathcal{D}$ 
with the c.p. map 
$\mathcal{D}=\sigma^{-1/2}(\cdot)\sigma^{-1/2}$, 
is indeed c.p. and therefore a channel, as required. 

If on the other hand we assume that $\ce$ is trace preserving, then since
the transpositions appearing in (\ref{duaal}) are taken with respect to
bases in which $\r$ and $\s$ are diagonal, while 
$\ce^\dagger(I_B) = I_A$ 
due to $\ce$ being trace preserving, we
have
\[
\ce'(\s) = \r.
\]

We summarize the central points when combining the separate assumptions above, that is, for a channel $\ce$:

\begin{theorem}
Consider a channel $\ce$ from $A$ to $B$ and a state $\r$ of $A$ such that the state $\s$ of $B$ given by $\s = \ce(\r)$ is an invertible operator. 
Then (\ref{kap'}) and (\ref{kap'CJ}) provide a uniquely determined channel $\ce'$ from $B$ to $A$. This channel is given by the formula (\ref{duaal}) in terms of transpositions with respect to orthonormal bases making $\r$ and $\s$ diagonal, and it satisfies $\ce'(\s) = \r$.
\end{theorem}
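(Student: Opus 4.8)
The plan is to assemble the theorem from the four ingredients already established in the running discussion of Subsection \ref{OndAfdDuaal}, since each piece has essentially been proved in passing; the work is to organize them cleanly under the combined hypotheses. First I would recall the standing setup: $\ce$ is a channel from $A$ to $B$, $\r$ a state of $A$, and $\s = \ce(\r)$ is assumed invertible. Since a channel is in particular a positive linear map, the construction of $\k$ via (\ref{CJweer}), its reverse $\k' = \car(\k)$ via (\ref{kap'}), and the invertibility of $\s$ together mean the defining relation (\ref{kap'CJ}) uniquely determines a linear map $\ce'$ from $L(H_B)$ to $L(H_A)$; indeed (\ref{E'formule}) exhibits the explicit inversion, and (\ref{duaal}) rewrites it as $\ce'(Y) = \r^{1/2}[\ce^\dagger(\s^{-1/2}Y^T\s^{-1/2})]^T\r^{1/2}$. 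So existence, uniqueness, and the formula (\ref{duaal}) are immediate from what precedes.

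Next I would verify the three asserted properties of $\ce'$, in each case invoking one of the three separate assumptions already discussed. For trace preservation, $\Tr(\ce'(Y)) = \Tr(Y)$ follows directly from (\ref{duaal}) using $\s = \ce(\r)$ and cyclicity of the trace, exactly as noted just before the theorem. For complete positivity: since $\ce$ is c.p., $\k = \id\otimes\ce(\ket{\psi}\bra{\psi}) \geq 0$ by (\ref{CJ}), so $\k' = R\k R^\dagger \geq 0$; applying Choi's criterion to the form $\k' = \sum_{kl}\ket{k^B}\bra{l^B}\otimes\ce'(\s^{1/2}\ket{k^B}\bra{l^B}\s^{1/2})$ shows $\mathcal{F} = \ce'(\s^{1/2}(\cdot)\s^{1/2})$ is c.p., and then $\ce' = \mathcal{F}\circ\mathcal{D}$ with $\mathcal{D} = \s^{-1/2}(\cdot)\s^{-1/2}$ c.p. gives $\ce'$ c.p.; combined with trace preservation, $\ce'$ is a channel from $B$ to $A$. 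Finally, for $\ce'(\s) = \r$: substitute $Y = \s$ into (\ref{duaal}); the transpositions are taken in bases diagonalizing $\r$ and $\s$, so $\s^{-1/2}\s^T\s^{-1/2} = I_B$, and since $\ce$ is trace preserving, $\ce^\dagger(I_B) = I_A$, yielding $\ce'(\s) = \r^{1/2}I_A^T\r^{1/2} = \r$.

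Since essentially every step has been carried out verbatim in the preceding text, there is no genuine obstacle here — the theorem is a summary statement. The only point requiring a little care is making sure the hypotheses are correctly apportioned: uniqueness of $\ce'$ needs only that $\ce$ is positive and $\s$ invertible; the c.p. argument additionally needs $\ce$ c.p.; and both trace preservation of $\ce'$ and the fixed-point relation $\ce'(\s) = \r$ use trace preservation of $\ce$ (the latter through $\ce^\dagger(I_B) = I_A$). When $\ce$ is a full channel all three hypotheses hold simultaneously, so all conclusions follow. I would therefore write the proof as a short paragraph that (i) cites the construction above for existence, uniqueness, and formula (\ref{duaal}); (ii) notes trace preservation from (\ref{duaal}); (iii) reproduces or cites the Choi-criterion argument for complete positivity; and (iv) evaluates (\ref{duaal}) at $Y = \s$ using $\ce^\dagger(I_B) = I_A$ to get $\ce'(\s) = \r$.
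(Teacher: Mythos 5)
Your proposal is correct and follows essentially the same route as the paper, which itself presents this theorem as a summary of the immediately preceding construction and the three separate verifications (trace preservation from (\ref{duaal}), complete positivity via Choi's criterion applied to $\k' = R\k R^\dagger \geq 0$, and $\ce'(\s)=\r$ from $\ce^\dagger(I_B)=I_A$ together with the transpositions being taken in the diagonalizing bases). Your apportioning of which hypothesis on $\ce$ is needed for which conclusion matches the paper's presentation exactly.
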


Although the AC dual was conceived in a more general and abstract setting, not at all connected to the notion of elementary transitions,
it turns out that in finite dimensions formula (\ref{duaal}) is also a formula for the AC dual
$\ce^{\AC}$ of $\ce$ with respect to $\r$ and $\s = \ce(\r)$ (refer to the Appendix).
Thus we have also shown the following key fact, which in the case of a channel is relevant to quantum
detailed balance in Subsection \ref{OndAfdDuaal&FB}.

\begin{theorem}
For any completely positive map $\ce$ from $L(H_A)$ to $L(H_B)$, and a state $\r$ of $A$ such that
$\s = \ce(\r)$ is invertible, we have
\begin{equation}
	\label{AC=duaal}
\ce^{\AC} = \ce'. 
\end{equation}
\end{theorem}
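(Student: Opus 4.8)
The plan is to verify \eqref{AC=duaal} by comparing the explicit finite-dimensional formula for the Accardi--Cecchini dual, as recalled in the Appendix, with formula \eqref{duaal} for $\ce'$. Since both maps are linear from $L(H_B)$ to $L(H_A)$, it suffices to show they agree on a spanning set, or equivalently that the defining identity characterising $\ce^{\AC}$ is satisfied by $\ce'$. Concretely, the AC dual with respect to $\r$ and $\s = \ce(\r)$ is the unique linear map $\ce^{\AC}$ satisfying an adjoint-type relation twisted by the modular structure of $\r$ and $\s$ --- in finite dimensions, $\Tr\!\left(\s\, \ce(X)\, Y\right)$-type identities relating $\ce$ and $\ce^{\AC}$ through the square-root weights $\r^{1/2}$ and $\s^{1/2}$. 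The natural approach is therefore: first, state the finite-dimensional AC-dual formula from the Appendix; second, observe that \eqref{duaal} can be rewritten, using $\ce^\dagger$ and the transpositions with respect to the eigenbases of $\r$ and $\s$, into exactly that form; third, conclude by uniqueness of the AC dual.

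First I would recall from the Appendix the characterisation of $\ce^{\AC}$: it is the map determined by requiring
\[
\Tr\!\left(\ce^{\AC}(Y)\, \r^{1/2} X \r^{1/2}\right)
=
\Tr\!\left(Y\, \s^{1/2} \ce(X) \s^{1/2}\right)
\]
for all $X \in L(H_A)$ and $Y \in L(H_B)$, this being the finite-dimensional incarnation of the KMS-type condition in \cite[Proposition 3.1]{AC}; invertibility of $\s$ (and the fact that $\r$ is automatically of full rank on its support, which we may take to be all of $H_A$ by restricting) guarantees existence and uniqueness. Then I would take the formula \eqref{duaal}, namely
\[
\ce'(Y) = \r^{1/2}\left[\ce^\dagger\!\left(\s^{-1/2} Y^{T} \s^{-1/2}\right)\right]^{T}\r^{1/2},
\]
and plug it into the left-hand side of the AC characterisation. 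Using cyclicity of the trace, the identity $\Tr(A^{T} B^{T}) = \Tr(A B)$ (valid since both transpositions here are with respect to the \emph{same} orthonormal bases appearing throughout), and the defining property $\Tr(\ce^\dagger(Z) W) = \Tr(Z\, \ce(W))$ of $\ce^\dagger$, the weights $\r^{\pm 1/2}$ and $\s^{\pm 1/2}$ should cancel and rearrange to leave precisely $\Tr\!\left(Y\, \s^{1/2}\ce(X)\s^{1/2}\right)$. Uniqueness then forces $\ce' = \ce^{\AC}$.

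The main obstacle I anticipate is purely bookkeeping: keeping straight which transposition acts on which Hilbert space --- the outer $T$ in \eqref{duaal} is with respect to the eigenbasis of $\r$ on $H_A$ and the inner $T$ with respect to the eigenbasis of $\s$ on $H_B$ --- and making sure the transposition identities are applied with matching bases so that no stray transpose survives. A secondary subtlety is that the characterisation of $\ce^{\AC}$ I wrote down should be cross-checked against the precise normalisation used in the Appendix (some references fold the weights differently, e.g.\ placing $\r, \s$ rather than their square roots), so I would first pin down the Appendix's exact convention and adjust the intertwining identity accordingly. Once the conventions are fixed, the computation is a short sequence of trace manipulations with no analytic content, so I do not expect genuine difficulty beyond this.
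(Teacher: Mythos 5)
Your overall strategy --- write down the Appendix's defining relation for the AC dual and check that the formula (\ref{duaal}) for $\ce'$ satisfies it --- is essentially the paper's proof run in reverse (the paper starts from (\ref{ACdef}), derives the explicit formula for $E^{\AC}$ by "straightforward manipulations", passes to the Schr\"odinger picture, and observes that the result is literally (\ref{duaal})). The gap is in the key input. The characterisation you propose,
$\Tr\bigl(\ce^{\AC}(Y)\,\r^{1/2}X\r^{1/2}\bigr)=\Tr\bigl(Y\,\s^{1/2}\ce(X)\s^{1/2}\bigr)$,
is not the AC duality relation: (\ref{ACdef}) is built from the commutant representation $\pi'(a)X=Xa^{T}$, so both $a$ and $E^{\AC}(a)$ enter \emph{transposed}. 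Translated to the Schr\"odinger picture, the correct relation is
\[
\Tr\bigl(\ce^{\AC}(Y)\,\r^{-1/2}X^{T}\r^{-1/2}\bigr)
=
\Tr\bigl(\ce(X)\,\s^{-1/2}Y^{T}\s^{-1/2}\bigr),
\]
and with it the verification of (\ref{duaal}) is indeed the short trace computation you describe. With your relation as written the computation fails: after the cancellations one is left needing $\ce(\r X^{T}\r)=\s\,\ce(X)^{T}\s$ for all $X$, which is false for a generic channel (take $\ce=U(\cdot)U^{\dagger}$ with $U=\mathrm{diag}(1,i)$ and $\r$ maximally mixed).

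This is more than the "normalisation" issue you flag at the end. A transposition-free relation of this type characterises (a weight-shuffled version of) the KMS dual, i.e.\ the Petz recovery map, and the AC and KMS duals genuinely differ: $E^{\KMS}(a)=E^{\AC}(a^{T})^{T}$. Since $\ce'$ equals the AC dual and not the Petz map, carrying out your plan with the stated relation would lead you to "prove" a false identity. The transposes you would need to insert are exactly the point the paper is making in Sections 5--6 --- the transposition lives inside the duality rather than in a parity operation --- so here they are the substance of the theorem, not a convention to be adjusted after the fact.
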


This attaches a simple physical meaning to the AC dual, as the reverse of $\ce$, which is not obvious from its original abstract definition.
It also provides an alternative method to derive properties of $\ce'$ from the known mathematical theory of the AC dual. 


\subsection{Quantum detailed balance}
\label{OndAfdDuaal&FB}

Let's return to the case of one system $A$, where $m = n$, 
and a channel $\ce$ from $A$ to itself as in Section \ref{AfdEofb}. 
We continue to assume that $\s = \ce(\r)$ is invertible to ensure we have the dual $\ce'$.
The goal here is to characterize ETDB as the requirement of the invariance
$\ce(\r) = \r$ along with the equation
\begin{equation}
	\label{eofbDeurDualiteit}
\ce' = \ce,
\end{equation}
and consequently to show that ETDB is equivalent to SQDB. As both ETDB and the condition $\ce' = \ce$ describe how the opposite of every elementary transition in $\ce$ also appears in $\ce'$, this characterization of ETDB is intuitive and also easy to derive.

To prove it, first suppose that $\ce$ satisfies ETDB relative to $\r$.
Consequently $\k' = \k$, 
and from invariance in Section \ref{OndAfdInv} we know that
$\s = \ce(\r) = \r$, 
allowing us to take $\ket{i^B} = \ket{i^A}$.
The CJ map giving $\k$ and
$\k'$ from $\ce$ and $\ce'$ in (\ref{CJweer}) and (\ref{kap'CJ}) respectively can be inverted, as we've assumed that $\s$ is invertible. 
Thus indeed $\ce' = \ce$.

Conversely, suppose that 
$\ce(\r) = \r$ and 
$\ce' = \ce$. 
By (\ref{CJweer}) and (\ref{kap'CJ}) it follows that $\k' = \k$, 
hence $\ce$ indeed satisfies ETDB relative to $\r$ by Theorem \ref{eofbKar}.

On the other hand, the AC dual allows us to define SQDB as
$\ce(\r) = \r$ along with
\begin{equation}
	\label{sfbDefAC}
\ce^{\AC} = \ce.
\end{equation}
See the Appendix and \cite[Example 5.2]{DSS} 
(in that paper the AC dual was denoted by a prime). 

By (\ref{AC=duaal}) we see that (\ref{sfbDefAC}) simply says $\ce' = \ce$.
Thus we have shown that ETDB is equivalent to SQDB. In particular, the
conceptual underpinnings of ETDB in this paper then applies to SQDB, which
serves to motivate the latter as a formulation of quantum
detailed balance. 
We conclude that SQDB is a natural quantum version of detailed balance in a classical Markov chain, though formulated in an abstract form in terms of the AC dual of $\ce$.

The abstract approach to quantum detailed balance through duality is itself made more intuitive because of the elementary transition point of view. 
We expressed ETDB in the duality framework as (\ref{eofbDeurDualiteit}),
but with the dual $\ce'$ in this case having a clear meaning in terms of every elementary transition in $\ce$ having been replaced by its opposite, making the connection to the original intuitively clear formulation Definition \ref{eofb} of ETDB quite transparent.


\subsection{The classical case}

Here we verify the meaning of the dual $\ce'$ as the reverse of
$\ce$ relative to $\rho$ in the classical case. 
Consider  a probability distribution 
$\rho=\left[
\begin{array}
[c]{rrr}
\rho_{1} & \cdots & \rho_{m}
\end{array}
\right]  $ 
over a set $A$ consisting of $m$ points, and a $m \times n$
transition matrix $\tau=[\tau_{jk}]$ from $A$ to a set $B$ consisting of $n$
points, meaning that $\tau_{jk}\geq0$ and
\[
\sum_{k=1}^{n}\tau_{jk}=1
\]
for all $j=1,...,m$. Defining a probability distribution 
$\sigma=\left[
\begin{array}
[c]{rrr}
\sigma_{1} & \cdots & \sigma_{n}
\end{array}
\right]  $ 
over $B$ by
\[
\sigma=\rho\tau,
\]
one can define the \emph{reverse} $\tau'$ of $\tau$ relative to
$\rho$, from $B$ to $A$, to accomplish the following:

Given the distributions $\rho$ and $\sigma$, the probability that the
transition $(k,j)$ takes place from $B$ to $A$ should be the same as the
probability for $(j,k)$ to take place from $A$ to $B$. That is, we require%
\[
\sigma_{k}\tau_{kj}^{\prime}=\rho_{j}\tau_{jk}
\]
for all $j$ and $k$.

Assuming
\[
\sigma_{k}>0
\]
for all $k$, this can indeed be achieved by setting
\[
\tau_{kj}^{\prime}=\frac{\rho_{j}}{\sigma_{k}}\tau_{jk}
\]
to give the transition matrix $\tau' = [\tau_{kj}']$.

On the other hand, embedding $\tau$ into a quantum representation as in
Subsection \ref{OndAfdInterpVanEO} by defining the channel $\ce$ by (\ref{klasInbed}),
we obtain the dual 
\[
\ce'(\ket{k^B}  \bra{l^B}  )
=
\delta_{kl}\sum_{j=1}^{m}
\frac{\rho_{j}}{\sigma_{k}}\tau_{jk}\ket{j^A}  \bra{j^A}  
=
\delta_{kl}\sum_{j=1}^{n}
\tau_{kj}'\ket{j^A}  \bra{j^A}  ,
\]
as is easily verified from the second equality of (\ref{E'formule}).
We thus see that the dual $\ce'$ indeed corresponds to the classical reverse of the
transition matrix through the quantum representation. Again this does not involve parity, which is what we turn to next.

\section{Parity}
\label{AfdPariteit}


To complete our approach to quantum detailed balance, at least for finite
dimensional state spaces, we also look at ETDB when some form of parity has to be taken into account. 
We'll denote ETDB in the case of parity by ETDB-$\cp$, where $\cp$ is the operation implementing parity, as will be described in detail in Subsection \ref{OndAfdEofbP}.

In the classical case a standard example of parity is in phase space, where 
momentum has to be negated as part of obtaining the reverse of a transition. 
In brief terms, the reverse of the transition of a particle from 
point $(q,p)$ to point $(q',p')$ 
in the phase space, is given by the transition of an identical particle from 
$(q',-p')$ to $(q,-p)$.
This change of sign is the parity in this case. 
See for example \cite[Section 5.3.4]{Gar} for a textbook discussion.

It is important to note that the transposition (with respect to our basis in which $\r$ is diagonal) as the reversing operation $\th$ in SQDB-$\th$ is not merely a choice in Subsection \ref{OndAfdDuaal&FB}.
It is the specific case that gives the equivalence of SQDB-$\th$ with ETDB.
In particular, this transposition should not be interpreted as being related to parity, 
since ETDB simply involves considering the opposite of every elementary transition rather than any form of parity.

The aim of this section is to contrast the case of ETDB with parity to
that without, and subsequently to continue casting light on the meaning of SQDB, and more generally SQDB-$\th$.
To do this, ETDB and SQDB will be shown to be equivalent even when parity is taken into account. As in Section \ref{AfdDuaal} this will be done using the dual of the channel, but now generalized to allow for parity.

In particular, 
the role of transposition as a reversing operation in previous literature on SQDB will be clearly separated from parity.
As part of this, in Subsection \ref{OndAfdOmkVanEOvsP} the distinction between the opposite direction of elementary transitions on the one hand, and parity on the other, will be emphasized by factorizing a general $\theta$ into transposition and a second operation $\cp$. It is this latter operation, rather than the transposition, which can then be identified as describing parity in SQDB-$\th$, because of the latter's equivalence to ETDB-$\cp$.


\subsection{ETDB-$\cp$}
\label{OndAfdEofbP}
To recall a quantum formulation of parity operations, we first state the
classical case more abstractly as follows, but on a finite number of points in
line with our framework thus far. 
This will give a clear path to incorporate parity into ETDB.

We consider any permutation $\pi$ of $\cx=\{1,...,m\}$ such that $\pi\pi = \id$, i.e., applying the permutation twice, has no effect. This permutation will act as the parity on the ``phase space'' $\cx$, and the reverse of the transition 
$i \rightarrow j$ 
under this parity will be taken to be 
$\pi(j) \rightarrow \pi(i)$. I.e., the reverse of $(i,j)$ under $\pi$ is $(\pi(j) , \pi(i))$. Detailed balance under this parity is then expressed as
\begin{equation}
	\label{klasFBmetP}
\rho_{i} \tau_{ij} = \rho_{\pi(j)} \tau_{\pi(j) \pi(i)},
\end{equation}
for all $i$ and $j$.

In terms of an orthonormal basis $\ket{1},...,\ket{m}$ for a Hilbert space $H$, we can give a quantum representation of $\pi$ as either a unitary or anti-unitary operator $P$ on $H$ defined through
\[
P\ket{i} = \ket{\pi(i)}
\]
which on operator level (or the Heisenberg picture) leads to 
\[
\cp(X) = PXP
\]
for any operator $X$ from $H$ to itself.
Given $\car$ from Subsection \ref{OndAfdDefEofb}, we'll write
\[
\cq = \car \circ (\cp \otimes \cp) = (\cp \otimes \cp) \circ \car,
\]
i.e.,
\[
\cq (X \otimes Y) = \cp(Y) \otimes \cp(X).
\]
In terms of (\ref{KwantVoorstVanKlasEO}) one easily verifies that
\[
\cq( \k_{(i,j)} ) 
=
\k_{ ( \pi(j) , \pi(i) )}.
\]

This enables us to adapt the approach that led to ETDB in Section \ref{AfdEofb}, to a framework allowing for parity. Returning to the quantum setting of that section, we now include parity as a unitary or anti-unitary operator $P$ on $H$ such that $P^2 = I$.
Again the parity $\cp$, as well as $\cq$, can be defined as above. 

Usually in the context of quantum detailed balance parity is taken to be time-reversal given by an anti-unitary $P$, but nothing in our development below will require it to be anti-unitary rather than unitary. 
Thus our more general view,
which will help clarify the role of $\th$ in parity in SQDB-$\th$ in Subsection \ref{OndAfdOmkVanEOvsP}.
The linear reversing operation for SQDB in Section \ref{AfdDuaal} for the case without parity turned out to be transposition. 
In Subsection \ref{OndAfdOmkVanEOvsP} it will be seen that more generally a linear reversing operation $\th$ is transposition composed with a $\cp$ given by a unitary $P$. To allow for anti-unitary $P$ as well, we'll consequently allow conjugate linear  $\th$.

\begin{example}
	Although we are working in a finite dimensional state space, we can give an example of parity in the form of momentum negation above, to illustrate our discussion of parity concretely in a very simple setting. 
	We use a commutation relation in unitary form for complementary observables in finite dimensions along the lines of \cite{S} as well as \cite[Chapter IV.D]{Weyl} (see \cite{Vour} for a review of various developments and references to applications).
	Having chosen our basis for $H$, we define unitary operators $U$ and $V$ on $H$ through the matrices
	\[
	U = 
	\left[
	\begin{array}
		[c]{cccc}
		0 &              &             & 1 \\
		1  & 0          &             & \\
		    & \ddots & \ddots & \\
		    &             & 1           & 0
	\end{array}
	\right]
	\qquad\mbox{and}\qquad
	V =
	\left[
	\begin{array}
		[c]{cccc}
		1 &    &              & \\
		   & r &              & \\
		   &   &  \ddots & \\
		   &   &              & r^{m-1}
	\end{array}
	\right]
	\]
	with $r = e^{2\pi i/m}$ and where the blank entries are zeros. Then 
	\[
	VU = rUV
	\]
	is the mentioned commutation relation, with $U$ and $V$ being connected by the Fourier transform for the cyclic group of $m$ elements (sometimes called the discrete or quantum Fourier transform) given by
	\[
	V = F^\dagger U F
	\]
	where
	$F = [F_{kl}]$ and $F_{kl} = r^{-kl}/m^{1/2}$.
	We think of $U$ and $V$ as a discrete position $q$ and momentum $p$ pair in unitary form
	\[
	U = e^{\frac{2\pi i}{m}q}
	\quad\mbox{and}\quad 
	V = e^{\frac{2\pi i}{m}p}.
	\]
	In particular, we are using a basis in which momentum is diagonal, and in our conventions $\r$ would have to be diagonal in this basis as well.
	In this context momentum parity can be implemented by complex conjugation (as defined in the Appendix) in the chosen basis, 
	$$P = C,$$
	as is easily verified by applying the resulting $\cp^\dagger$ to $V$, as the latter is from the observable algebra, not from the space of density matrices.
\end{example}

Returning to our general setting and arguing analogously to Subsection \ref{OndAfdDefEofb}, we introduce the following definition, which similar to Subsection \ref{OndAfdDefEofb}, contains the classical case (\ref{klasFBmetP}).
\begin{definition}
The channel $\mathcal{E}$ is said to satisfy ETDB relative to $\rho$ \emph{under the parity} $\cp$, or ETDB-$\cp$ relative to $\rho$ for short, if
there is a CJ decomposition (\ref{CJ-ontb}) of $\mathcal{E}$ relative to $\rho$ such that for
each $\alpha$ there is a $\beta$ satisfying
\[
	\mathcal{Q}(\kappa_{\alpha})
	=\kappa_{\beta}
	\text{ \ \ and \ \ }
	p_{\alpha}
	=
	p_{\beta}. 
\]
\end{definition}

The theory in this subsection and the next then proceeds essentially in parallel to the case without parity, with appropriate  changes to the arguments in the previous two sections.
Using the same argument as in Theorem \ref{eofbKar}, but in terms of the self-adjoint unitary operator 
$$
Q = R ( P \otimes P ) = ( P \otimes P ) R
$$ 
implementing $\cq$, instead of in terms of $R$, we find that ETDB-$\cp$ is satisfied if and only if
\[
\cq(\k)  = \k.
\]
\begin{sloppypar} 
	\noindent
	As in Subsection \ref{OndAfdOnafhVanCJ}, ETDB-$\cp$  is independent of the (necessarily) complete CJ-decomposition being used, but where completeness is now defined in terms of closedness under $\cq$ rather than $\car$.
\end{sloppypar}

For certain further developments, one additionally has to assume that $\rho$ commutes with $P$:
\begin{equation}
	\label{PrhoKom}
	P\rho = \rho P.
\end{equation}
This was not needed in the definition of ETDB-$\cp$ above, but it does become necessary for certain parts of the theory around ETDB-$\cp$ below.

For example, even classically, deriving invariance of the probability distribution under the transition matrix, 
\[
	(\rho\tau)_{j}
	=
	\sum_{i=1}^{n}\rho_{i}\tau_{ij}
	=
	\sum_{i=1}^{n} \rho_{\pi(j)} \tau_{\pi(j) \pi(i)}
	=
	\rho_{\pi(j)},
\]
requires the invariance 
$\r_{\pi(j)} = \r_j$
of the probability distribution under parity as a final step, translating to (\ref{PrhoKom}), $\cp(\rho) = \rho$, in the quantum case. 
Indeed, it is easily checked that the latter is used when adapting either argument from Subsection \ref{OndAfdInv} to prove that 
$$
\ce(\rho) = \rho
$$
when ETDB-$\cp$ is satisfied. This indicates that (\ref{PrhoKom}) is a reasonable assumption to make in ETDB-$\cp$.

In the next subsection it will be seen that to characterize ETDB-$\cp$ analogously to Subsection \ref{OndAfdDuaal&FB} in terms of a dual of the channel, will also require (\ref{PrhoKom}), although the basic theory for the particular dual does not.

\subsection{A dual channel incorporating parity}
\label{OndAfdDuaalMetP}

One can indeed extend the theory of dual channels in Subsection \ref{OndAfdDuaal} to incorporate parity. In the framework of that subsection, also consider parities $\cp_A$ and $\cp_B$ for $A$ and $B$ respectively, and set
\[
\cq = \car \circ (\cp_A \otimes \cp_B) = (\cp_B \otimes \cp_A) \circ \car,
\]
though not assuming that $\cp_A(\r) = \r$ or $\cp_B(\s) = \s$.

For any channel from $A$ to $B$ we still use $\k$ in (\ref{CJ}), but instead of $\k' = \car(\k)$ in (\ref{kap'}), we now consider it's reverse with parity included, which we denote by
\begin{equation}
	\label{kapP}
	\k^{\cp}
	=
	\cq(\k)
	=
	\sum_{ij}
	\rho_{i}^{1/2}\rho_{j}^{1/2} 
	\cp_B( \ce( \ket{i^A}  \bra{j^A} ) ) \otimes \cp_A( \ket{i^A}  \bra{j^A} ).
\end{equation}
Note that as opposed to $\k'$ in (\ref{kap'}), we are now working relative to the states
$\Tr_A(\k^\cp) = \cp_B(\s)$ 
and 
$\Tr_B(\k^\cp) = \cp_A(\r)$, 
instead of $\Tr_A(\k') = \s$ 
and $\Tr_B(\k') = \r$,
where $\Tr_A$ and $\Tr_B$ are the partial traces over $A$ and $B$ respectively.
Hence, when we set up the dual of $\ce$ corresponding to $\k^\cp$ in a way paralleling $\ce'$ in (\ref{kap'CJ}), we rather have to work relative to
$$
\sigma_{\cp} = \cp_B(\sigma) 
\ \ \text{and} \ \ 
\rho_{\cp} = \cp_A(\rho)
$$ 
and in terms of the bases $P_B \ket{j^B}$ and $P_A \ket{i^A}$ diagonalizing them.

Then the dual $\ce^\cp$ of $\ce$ corresponding to $\k^\cp$, is defined via
\[
	\label{kapPCJ}
	\k^\cp
	=
	\sum_{kl}
	\sigma_{k}^{1/2}\sigma_{l}^{1/2} ( P_B\ket{k^B}  \bra{l^B}P_B )
	\otimes
	\ce^\cp( P_B\ket{k^B}  \bra{l^B}P_B ), 
	\]
as $\ce'$ was via (\ref{kap'CJ}). Comparing this to (\ref{kapP}), but written in the form
\[
	\sum_{ij}
	\rho_{i}^{1/2}\rho_{j}^{1/2} 
	\cp_B \circ \ce \circ \cp_A ( P_A\ket{i^A}  \bra{j^A}P_A ) ) 
	\otimes 
	( P_A \ket{i^A}  \bra{j^A} P_A )
	\]
to have it in terms of the appropriate basis, we see that we have the same situation as in (\ref{kap'CJ}) versus (\ref{kap'}). This means that
\[
\ce^\cp = (\cp_B \circ \ce \circ \cp_A)',
\]
where it is implicit in our notation that this prime indicates the same dual as defined in Subsection \ref{OndAfdDuaal}, but relative to our current bases diagonalizing $\sigma_{\cp}$ and $\rho_{\cp}$.  From (\ref{duaal}) we conclude that
\[
\ce ^ \cp (Y)
=
\rho_{\cp}^{1/2}
\left[
{(\cp_B \circ \ce \circ \cp_A)}^\dagger
( \sigma_{\cp}^{-1/2} Y^{P_BT} \sigma_{\cp}^{-1/2}) 
\right]^ {P_AT}
\rho_{\cp}^{1/2}, 
\]
where $P_BT$ and $P_AT$ denote the transposes relative to the bases $P_B \ket{j^B}$ and $P_A \ket{i^A}$ and respectively. 
In addition we note that
$$
{(\cp_B \circ \ce \circ \cp_A)}^\dagger 
= 
\cp_A \circ \ce^\dagger \circ \cp_B
$$
as is easily verified, hence one can confirm from 
$X^{PT} = \cp ( \cp(X)^T )$,
for both $A$ and $B$, that
\begin{equation}
	\label{duaalMetP}
\ce^\cp = \cp_A \circ \ce' \circ \cp_B,
\end{equation}
with $\ce'$ the dual taken relative to the original bases $\ket{i^A}$ and $\ket{k^B}$ diagonalizing $\rho$ and $\sigma$.

Returning to Subsection \ref{OndAfdEofbP} with this dual in hand, one can then obtain a characterization of ETDB-$\cp$ in a similar way to Subsection \ref{OndAfdDuaal&FB}:

Assuming that 
$\cp(\rho) = \rho$, 
one can choose the orthonormal basis 
$\ket{1},...,\ket{m}$ 
such that $\r$ and $P$ are simultaneously diagonal, including for the case where $P$ is anti-unitary. Recall that $\r$ being diagonal is in force throughout in any case as part of our CJ decompositions. In particular we then know that 
$P\ket{i} = \pm \ket{i}$
for all $i$. It is then straightforward to adapt the argument in Subsection \ref{OndAfdDuaal&FB} to show that
the channel $\ce$ from $A$ to itself satisfies 
ETDB-$\cp$ relative to $\rho$
if and only if 
$\ce(\rho) = \rho$ and 
\begin{equation}
	\label{eofbPitvDualiteit}
\ce^\cp = \ce.
\end{equation}

Note that as with invariance in Subsection \ref{OndAfdEofbP}, we needed (\ref{PrhoKom}) for this characterization. This essentially just expresses the duals $\k^\cp$ and $\ce^\cp$ above in the same basis as that used in the initial CJ decomposition of $\ce$ that we are employing.

\subsection{SQDB-$\cp$}
\label{OndAfdOmkVanEOvsP}

The foregoing treatment of parity allows us to contrast it with the role of transposition 
in the definition of SQDB in Subsection \ref{OndAfdDuaal&FB}, for a system 
$A =B$. 
In this way we can clear up conceptual issues around SQDB and SQDB-$\th$ further, including explaining why this subsection's title refers to SQDB-$\cp$ rather than SQDB-$\th$.

As we have seen, SQDB  is equivalent to ETDB. In the latter, swapping the two copies of the system in the tensor product  simply enacts the opposite direction of elementary transitions, in step with the opposite direction of transitions in a classical Markov chain. The latter is not due to any parity related aspects like changing signs of momenta, but simply describes the two directions in which probability can flow between any two points in the classical case. Equality of these two flows for every pair of points being detailed balance. Analogously in the quantum case, but in terms of elementary transitions, as discussed in Section \ref{AfdEofb}. ETDB, and consequently SQDB, do not involve parity.

The AC dual in the definition (\ref{sfbDefAC}) of SQDB itself only involves the opposite direction for each elementary transition, not parity, as $\ce^{\AC} = \ce'$. 
We are using a basis making the state $\rho$ of the system diagonal to set up the CJ map and CJ decompostion of $\ce$.
Transposition in this basis appears in the AC dual, and that's how it enters the definition of SQDB.
Any matters associated to parity will be seen to be distinct from this transposition. 

To see this clearly, let's return to SQDB-$\theta$ for a general reversing operation $\theta$ written in the form
\[
\theta(X) = \Theta X^\dagger \Theta,
\]
where $\Theta$ is an anti-unitary or unitary operator such that 
$\Theta^2 = I$. 
The anti-unitary case corresponds to the literature on SQDB-$\th$.

To avoid any confusion regarding different conventions, we mention that in Section \ref{AfdDuaal} the transposition can be  written in the form
\begin{equation}
	\label{TitvC}
X^T = C X^\dagger C,
\end{equation}
where, as in the Appendix, $C$ is the anti-unitary operator which takes the complex conjugate of each component of a vector when expressed in our chosen basis.
However, one can use an alternative convention for reversing operations in SQDB-$\theta$, and rather write (\ref{TitvC}) in the form
$
\th(X) = CXC,
$
expressing SQDB in terms of this instead, and indeed extend this to the more general case $\th(X) = \Theta X \Theta$ for some anti-unitary operator $\Theta$. 
But since having the transposition itself as an example of a reversing operation in SQDB-$\th$ is convenient, we rather use the convention $\Theta X^\dagger \Theta$, as in some of the literature (like \cite{FR}). 
To disentangle the roles of $\th$ and parity $\cp$, though, it is of value to allow for unitary $\Th$ as well.

As in the usual definition of SQDB-$\th$ in the literature, we assume that 
\[
\th(\r) = \r,
\]
i.e., $\Theta\rho = \rho\Theta$,
and we choose our orthonormal basis to simultaneously diagonalize both $\rho$ and $\Theta$. 
In a moment we'll see that under these assumptions $\theta$ factorizes into transposition and a uniquely determined parity $\cp$ given by
$\cp(X) = PXP$
for a diagonal 
unitary or anti-unitary $P$, when $\Th$ is anti-unitary or unitary, repectively, 
and that SQDB-$\theta$ is equivalent to ETDB-$\cp$. 
This will clearly delineate the role of transposition versus that of parity.

Indeed, as explained at the end of the appendix, this $P$ is given by what is referred to as $\bar{\Theta}$ there, giving the simple factorization
\begin{equation}
\label{thetaFakt}
\theta(X) = \cp(X^T) = \cp(X)^T,
\end{equation}
with SQDB-$\theta$ being expressed by $\ce(\rho) = \rho$ along with the equation
\begin{equation}
	\label{sfbDefP}
	\cp \circ \ce^{\AC} \circ \cp = \ce
\end{equation}
extending $\ce = \ce^{\AC}$ in the definition (\ref{sfbDefAC}) of SQDB without parity involved. This equation is equivalent to
\[
\ce^\cp = \ce,
\]
where we have also used (\ref{AC=duaal}) and the results of Subsection \ref{OndAfdDuaalMetP}.
This tells us that SQDB-$\theta$ is precisely ETDB-$\cp$.

The conventional case of a parity operation where $P$ is anti-unitary, corresponds to unitary $\Th$, which is why we allowed for this option in $\th$ above.
For example, when $P = C$, we have $\th(X) = X^\dagger$.

Condition (\ref{sfbDefP}) for SQDB-$\th$ above 
(together with $\ce(\rho) = \rho$) 
is the usual definition in the literature, but written in our notation and allowing for unitary $\Th$. (Also see the Appendix.) In this form it is clear that SQDB-$\th$ can equivalently be viewed as depending on $\cp$ rather than $\th$.

The equivalence between SQDB-$\theta$ and ETDB-$\cp$ tells us that any parity involved in SQDB-$\theta$ is isolated in $\cp$ obtained by factorizing $\theta$ as in (\ref{thetaFakt}), rather than being given by transposition. 
This is simply because in our approach to ETDB-$\cp$ we have argued that $\cp$ describes parity. 

Keep in mind that all of this is in terms of our chosen basis, in which both $\rho$ and $\Theta$ are diagonal.
In particular, when $\theta$ is taken as the transposition in this basis, there is no parity involved in the resulting SQDB condition.
This transposition is part of the underlying duality used to define SQDB, with or without parity. This duality, with the transposition built in,
emerges from the elementary transition approach and ETDB, again with or without parity;
recall (\ref{duaal}), (\ref{eofbDeurDualiteit}), (\ref{duaalMetP}) and (\ref{eofbPitvDualiteit}).
Hence this transposition should not be viewed as an externally added operation to describe parity. 
Indeed, from this point of view SQDB-$\th$ is technically a misnomer. It should rather be called SQDB-$\cp$, namely standard quantum detailed balance with respect to a parity operation $\cp$, and should be defined from this perspective from the outset.

That is, given a parity $\cp$, in our  conventions SQDB-$\cp$ of $\ce$ relative to $\r$ is defined by $\ce(\rho) = \rho$ together with (\ref{sfbDefP}).

%
%

\section{Concluding remarks}

We have provided a foundation for quantum detailed balance based on what we termed elementary transitions. The latter are as direct an analogue of transitions (between pure states) in a classical Markov chain as seems to be allowed by quantum mechanics. They are simple to formulate as pure states of the composite of two copies of the system, but are no longer in general transitions between pure states of the system. 
This is more intuitive than previous approaches to quantum detailed balance, through the analogy with the classical case as summarized by 
$\rho_{i} \tau_{ij} = \rho_{j} \tau_{ji}$, 
though still being purely quantum mechanical. 
It clarifies conceptual aspects of quantum detailed balance, while also leading to a rigorous mathematical setting connecting to previous mathematical work related to quantum detailed balance. 

We expect that the elementary transition point of view can also be of value in other aspects of quantum dynamics.
Reversing quantum dynamics is one theme already suggested by our approach in this paper.

In this paper we essentially built the theory by starting with quantum channels, and obtaining elementary transitions from them. 
Instead of taking channels as the starting point, one could explore using elementary transitions or their collective description $\k$ as the primary notion to express dynamics. In some respects this would be more general than channels, including in the context of detailed balance. 
Elements of this approach was in fact used in Section \ref{AfdInlVb}, 
and its potential value illustrated in the context of invariance in Subsection \ref{OndAfdInv},
but it was not explored systematically in this paper.

\section*{Appendix: The Accardi-Cecchini and KMS duals, and SQDB-$\th$}



We briefly outline the abstract theory of the Accardi-Cecchini (AC) and KMS duals in finite
dimensions to obtain concrete formulas for them and to define SQDB-$\th$.
We thus explain the definition
$\ce^{\AC}=\ce$ 
for SQDB used in Subsection
\ref{OndAfdDuaal&FB}, and similarly for SQDB-$\th$ in Subsection \ref{OndAfdOmkVanEOvsP}.
In addition, technical points regarding reversing operations $\th$ themselves are discussed for the benefit of Subsection \ref{OndAfdOmkVanEOvsP}. 
This includes a simple factorization of a reversing operation into transposition with respect to the chosen basis in which $\rho$ is diagonal, and a parity operation given by a unitary operator, or anti-unitary if we allow sufficiently general $\th$.

The AC dual was originally studied in \cite[Proposition 3.1]{AC}.
The KMS dual was subsequently systematically developed by Petz \cite{Pet} in a more general setting to cater for weights rather than just states (also see \cite{Pet86, Pet88}). 
However, for states (covering our quantum mechanical needs) the KMS dual is mathematically essentially the AC dual expressed or represented in an alternative way, and it already appeared in \cite{AC} as well, being called the bidual there. 
The KMS dual is also referred to as the standard dual or Petz recovery map. 
It has played an important role in work on reversibility of dynamics; see for example \cite{Pet88, Ts1}).

Consider a positive map 
$$\mathcal{E}:L(H_{A})\rightarrow L(H_{B}).$$
Use the notation
\[
E=\mathcal{E}^{\dagger}
\]
where $\mathcal{E}^{\dagger}:L(H_{B})\rightarrow L(H_{A})$ is defined through
$\Tr(X\ce^{\dagger}(b)) = \Tr(\ce(X)b)$. 
Here we write $X$ as an
element of $L(H_{A})$ viewed as the space containing the density matrices, and
$b$ as an element of $L(H_{B})$ viewed in the dual sense as an observable
algebra. As the literature mentioned above is in the context of operator
algebras and positive maps on them, they are working in terms of $E$
rather than $\mathcal{E}$, and the duals are defined in this context, but in a
far more general von Neumann algebraic framework than just the finite
dimensional case. To keep track of these dual points of view (we could say the
Schr\"{o}dinger and Heisenberg pictures respectively), we'll also write $a$
and $b$ for elements of any $L(H)$ viewed as an observable algebra, and $X$
and $Y$ for elements of $L(H)$ viewed as the space containing the density
matrices. To emphasize the role of $E$, we'll write
\[
E:\mathcal{B}\rightarrow\mathcal{A}
\]
where $\mathcal{A}$ and $\mathcal{B}$ denote $L(H_{A})$ and $L(H_{B})$
respectively, but when viewed as observable algebras.

\subsection*{The AC dual}
Define the Hilbert-Schmidt inner product on any $L(H)$ by
\[
\left\langle X,Y\right\rangle = \Tr(X^{\dagger}Y)
\]
and define $\pi$ and $\pi'$ by
\[
\pi(a)X = aX \text{ \ \ and \ \ } \pi'(a)X = Xa^{T}
\]
for all $a$ and $X$ in $L(H)$, in terms of the transposition $a^{T}$ of $a$ in a chosen orthonormal basis.
The prime in $\pi'$ in this case is in reference to the commutant of a
von Neumann algebra in standard form in the abstract theory, 
which in this concrete finite dimensional  treatment boils down to the right multiplication by $a^T$. 
The standard form in this case is effectively provided by the chosen basis. (The theory of standard forms of von Neumann algebras was originally developed in \cite{Ar74, Con74, Ha75}.)
The $\AC$ dual
\[
E^{\AC} : \mathcal{A}\rightarrow\mathcal{B}
\]
of $E$ is defined as the map satisfying the following:
\begin{equation}
	\label{ACdef}
\left\langle 
\s^{1/2} , \pi(b) \pi'( E^{\AC}(a)) ) \s^{1/2}
\right\rangle 
=
\left\langle 
\rho^{1/2},\pi(E(b))\pi'(a)\rho
^{1/2}
\right\rangle
\end{equation}
with $\rho$ and $\sigma$ as well as our chosen bases as in Subsection \ref{OndAfdDuaal}. 
In particular we assume that $\s$ is invertible and
$$
\sigma = \ce(\rho).
$$
By straightforward manipulations it then follows that
\[
E^{\AC}(a)
=
\sigma^{-1/2}
E^{\dagger}(\rho^{1/2}a^{T}\rho^{1/2})
^ T
\sigma^{-1/2}
\]
where $E^{\dagger}=\mathcal{E}$ and with the transpositions taken in the chosen bases. 
This dual expressed in terms of the
Scr\"{o}dinger picture, i.e., in terms of 
$\ce^{\AC}=(E^{\AC})^{\dagger}$, 
is then easily verified to be given by
\[
\ce^{\AC}(Y)
=
\r^{1/2}
\ce^{\dagger}(\s^{-1/2} Y^T \s^{-1/2})
^ T
\r^{1/2}
\]
for all $Y$ in $L(H_{B})$. This is exactly the formula obtained for
$\ce'$ in (\ref{duaal}). So indeed 
$\ce^{\AC} = \ce'$.

The reader is also referred to \cite[Theorem 2.5]{DS} for a brief review of the AC dual in a more general setting and in a form convenient for our purposes.

\subsection*{The KMS dual}
As for the KMS dual 
$E^{\KMS} : \mathcal{A} \rightarrow \mathcal{B}$ 
of $E$, it
can be defined in terms of the representation $D(\pi(b))=\pi(E(b))$ of $E$ as
the composition of maps
\[
D^{\KMS} = j_{B} \circ D^{\AC} \circ j_{A}
\]
\begin{sloppypar} 
\noindent
where $D^{\AC}$ represents 
$E^{\AC} : \mathcal{A} \rightarrow \mathcal{B}$ 
in the corresponding way, that is
$D^{\AC}(\pi'(a))=\pi'(E^{\AC}(a))$, 
and $E^{\KMS}$ has to be
recovered from the representation 
$D^{\KMS}(\pi(a)) = \pi(E^{\KMS}(a))$. 
The $j$'s in this formula arise from the so-called
modular conjugation of Tomita-Takesaki theory which map between a von Neumann algebra in standard from and its commutant (see \cite{T70} for an early
source and \cite{T2} for a later textbook treatment).
\end{sloppypar}

Indeed, the natural mathematical form of $E^{\AC}$ is in terms of the commutants of the algebras in their standard form, which are visible in (\ref{ACdef}) above in the expressions
$\pi'(a)$ and $\pi'(E^{\AC}(a)))$.
These expressions translate to elements of the commutants (in standard form) of $\ca$ and $\cb$ respectively.
The KMS dual is the mathematically natural way of carrying the AC dual over to the observable algebras themselves, and in this sense is essentially just a different way of expressing the AC dual.
In fact, exactly this construction of the KMS dual already appeared in \cite[p. 254]{AC} under the name ``bidual map''.
Petz's more general definition of $E^{\KMS}$ through a duality relation in \cite{Pet} (also see a more specialized later version \cite{Pet88}), is equivalent to [1]'s construction where their assumptions overlap, in particular for the case of channels and when working relative to states (instead of weights).

In our setup the $j$'s are given
by transposition with respect to the chosen bases, namely
\[
j_{A}(\pi(a)) = \pi'(a^{T}) \text{ \ \ and \ \ } j_{B}(\pi'(b)) = \pi(b^{T}).
\]
Consequently,
\[
E^{\KMS}(a) = E^{\AC}(a^{T})^{T}.
\]
This then gives a formula for $E^{\KMS}$ from that of $E^{\AC}$ above, and due to
$E^{\AC} = E'$
by (\ref{AC=duaal}) of course also relates $E^{\KMS}$ to $E'$. Correspondingly for $\ce^{\KMS}$. 
This formula,
$$
\ce^{\KMS}  (Y)
= 
\ce^{\AC}(Y^{T})^{T}
=
\r^{1/2}
\ce^{\dagger}(\s^{-1/2} Y \s^{-1/2})
\r^{1/2},
$$
from that of $\ce^{\AC}$ above, appears for example in the textbook treatment \cite[Section 12.3]{Wi}, being called the Petz recovery map there.

\subsection*{SQDB-$\th$}
With 
$H = H_A = H_B$, 
and under the assumption
$$\ce(\rho) = \rho,$$
SQDB-$\theta$ of $E$ relative to $\r$ can be defined as 
\begin{equation}
    \label{sfbDef}
\theta \circ E^{\KMS} \circ \theta = E
\end{equation}
for a reversing operation $\th$ such that $\th(\r) = \r$ (see for example \cite[Definition 1]{FR} or \cite[Section 6]{DS}). 
Abstractly in our conventions, the reversing operation $\theta$ in the literature is taken as a linear operation preserving the hermitian adjoint but swapping the product (an anti-$*$-automorphism), though we'll turn to a more concrete expression for it shortly.
In our current setting, and with $\theta$ taken as
transposition, (\ref{sfbDef}) can be written as
\[
E^{\AC} = E,
\]
as has also been pointed out in \cite[Example 5.2]{DSS}. I.e., 
$$
\ce^{\AC} = \ce
$$
in the Schr\"{o}dinger view.  
Keep in mind that this, along with the assumption 
$\ce(\rho) = \rho$ 
above, is what is simply referred to as SQBD in Section \ref{AfdDuaal} (see in particular Subsection \ref{OndAfdDuaal&FB}).

Note that the transposition appears as a natural part of the $\AC$ dual in the chosen bases, rather than as an externally introduced operation.
In  the KMS dual they are cancelled by its mathematically natural definition.
This is a simple conceptual point relevant to clarifying some aspects of parity when added to SQDB in Section \ref{AfdPariteit}.
In particular, to help understand that we should separate this transposition 
from parity.

The remainder of this appendix treats some mathematical points connected to parity, in particular related to the separation mentioned above. For a general reversing operation $\theta$ of the form
\[
\theta(X) = \Theta X^\dagger \Theta,
\]
as appearing in the usual formulation of SQDB-$\th$ \cite{FU}, where $\Theta$ is an anti-unitary operator with
$\Theta^2 = I$ and
\[
\Th\rho = \rho\Th,
\]
we see from (\ref{sfbDef}) that SQDB-$\theta$ is similarly expressed as 
\[
\theta \circ T \circ E^{\AC} \circ T \circ \theta = E
\]
where we have written $T$ for transposition, i.e., 
$$T(a) = a^T.$$

However, in keeping with the discussion in Subsection \ref{OndAfdOmkVanEOvsP}, we'll also allow unitary $\Th$, making $\th$ conjugate linear. This therefore extends the usual definition of SQDB-$\th$. We need this extension to handle parity properly in Subsection \ref{OndAfdOmkVanEOvsP}.

Now define $\bar{\Th}$ as the unitary (when $\Th$ is anti-unitary) or anti-unitary (when $\Th$ is unitary) operator given by
\[
\bar{\Th} = C \Th = \Th C
\]
where $C$ is component-wise complex conjugation in the chosen basis.
That is, $C$ is the anti-unitary operator defined as the conjugate linear operator satisfying
$$C\ket{i} = \ket{i}.$$
Here $\ket{1},...,\ket{m}$ are chosen to form an orthonormal basis for $H$ with respect to which both $\rho$ and $\Theta$ are diagonal as in Subsection \ref{OndAfdOmkVanEOvsP}, and with respect to which the transposition is being taken. This is of course the basis to be used in the CJ map and CJ decomposition of $\ce$.
In Subsection \ref{OndAfdOmkVanEOvsP} it is seen that $\bar{\Th}$ serves as the parity operator $P$ discussed in Subsection \ref{OndAfdEofbP}. From the definition above, $\bar{\Th}$ simply changes $\Th$ from conjugate linear to linear, or vice versa, but still having the same values as $\Th$ on the basis.

We clearly have
\[
\Th = \bar{\Th}C = C\bar{\Th}.
\]
Consequently
\[
\theta = \bar{\theta} \circ T = T \circ \bar{\theta},
\]
that is, $\theta$ factorizes as the transposition and an operation $\bar{\theta}$ defined by
\[
\bar{\th} (X) = \bar{\Th} X \bar{\Th}.
\] 
Clearly $\bar{\th}$ is uniquely determined by $\th$, since the factorization is equivalent to
$\bar{\theta} = \theta \circ T = T \circ \theta$ by simply composing it with $T$.
The generalized SQDB-$\theta$ condition above (where both anti-unitary and unitary $\Th$ are allowed) can now be written as
\[
\bar{\th} \circ E^{\AC} \circ \bar{\th} = E,
\]
or equivalently in the Schr\"odinger picture as
\[
\bar{\th} \circ \ce^{\AC} \circ \bar{\th} = \ce,
\]
of course still under the assumption 
$\ce(\rho) = \rho$.

As seen above, the transposition is a natural part of $\ce^{\AC}$, which is already present in SQDB without any parity involved.
Hence at least mathematically SQDB-$\th$ just depends on $\bar{\th}$, making the alternative name SQDB-$\bar{\th}$ more appropriate. 
In Subsection \ref{OndAfdOmkVanEOvsP} the same conclusion is reached from the perspective of ETDB under parity.

\section*{Acknowledgements}
We thank the referees for their feedback and suggestions.


\begin{thebibliography}{99} 

\bibitem {AC}L. Accardi and C. Cecchini, 
Conditional expectations in von Neumann algebras and a theorem of Takesaki, 
\href{https://doi.org/10.1016/0022-1236(82)90022-2}{
\textit{J. Funct. Anal.} 
\textbf{45} (1982), 245--273.}

\bibitem {AFQ}L. Accardi, F. Fagnola and R. Quezada, 
Weighted detailed balance and local KMS condition for non-equilibrium stationary states, 
in:
\textit{Perspect. Nonequilibrium Statist. Phys.}, \textit{Busseikenkyu} \textbf{97} (2011), 318--356.

\bibitem {AFQ2}L. Accardi, F. Fagnola and R. Quezada, 
On three new principles in non-equilibrium statistical mechanics and Markov semigroups of weak coupling limit type,
\href{https://doi.org/10.1142/s0219025716500090}{
\textit{Infin. Dimens. Anal. Quantum Probab. Relat. Top.}
\textbf{19} (2016), 1650009.}

\bibitem {AI}L. Accardi and K. Imafuku, 
Dynamical detailed balance and local KMS condition for non-equilibrium states, 
\href{https://doi.org/10.1142/s0217979204024070}{
\textit{Internat. J. Modern Phys. B}
\textbf{18} (2004), 435--467.}

\bibitem {Ag}G. S. Agarwal, 
Open quantum Markovian systems and the microreversibility,
\href{https://doi.org/10.1007/bf01391504}{
\textit{Z. Physik} 
\textbf{258} (1973), 409--422.}

\bibitem {Al}R. Alicki, 
On the detailed balance condition for non-Hamiltonian systems,
\href{https://doi.org/10.1016/0034-4877(76)90046-x}{
\textit{Rep. Math. Phys.} \textbf{10} (1976), 249--258.}

\bibitem {Ar74}H. Araki, 
Some properties of modular conjugation operator of von Neumann algebras and a non-commutative Radon-Nikodym theorem with a chain rule, 
\href{https://doi.org/10.2140/pjm.1974.50.309}{
\textit{Pacific J. Math.} \textbf{50} (1974), 309--354.}

\bibitem {AP}P. Arrighi and C. Patricot, 
On quantum operations as quantum states,
\href{https://doi.org/10.1016/j.aop.2003.11.005}{
\textit{Ann. Physics} \textbf{311} (2004), 26--52.}

\bibitem {BK}H. Barnum and E. Knill,
Reversing quantum dynamics with near-optimal quantum and classical fidelity,
\href{https://doi.org/10.1063/1.1459754}{
\textit{J. Math. Phys.} \textbf{43} (2002), 2097--2106.}

\bibitem {BQ}J. R. Bola\~{n}os-Servin and R. Quezada, 
A cycle decomposition and entropy production for circulant quantum Markov semigroups,
\href{https://doi.org/10.1142/s0219025713500161}{
\emph{Infin. Dimens. Anal. Quantum Probab. Relat. Top.} \textbf{16} (2013), 1350016.}

\bibitem {BQ2}J. R. Bola\~{n}os-Servin and R. Quezada, 
The $\Theta$-KMS adjoint and time reversed quantum Markov semigroups,
\href{https://doi.org/10.1142/s0219025715500162}{
\textit{Infin. Dimens. Anal. Quantum Probab. Relat. Top.} \textbf{18} (2015), 1550016.}

\bibitem {CW}H. J. Carmichael and D. F. Walls, 
Detailed balance in open quantum Markoffian systems, 
\href{https://doi.org/10.1007/bf01318974}{
\textit{Z. Physik B} \textbf{23} (1976), 299--306.}

\bibitem {C}M.-D. Choi, Completely positive linear maps on complex matrices,
\href{https://doi.org/10.1016/0024-3795(75)90075-0}{
\textit{Linear Algebra Appl.} \textbf{10} (1975), 285--290.}

\bibitem {Con74}A. Connes, 
Caract\'{e}risation des espaces vectoriels ordonn\'{e}s sous-jacents aux alg\`{e}bres de von Neumann, 
\href{https://doi.org/10.5802/aif.534}{
\textit{Ann. Inst. Fourier (Grenoble)} \textbf{24} (1974), 121--155.}

\bibitem {Dem}Y. N. Demkov, 
The principle of detailed balance in quantum mechanics and some identities for the amplitudes of scattering in the theory of collisions, (Russian),
\textit{Dokl. Akad. Nauk SSSR} (N.S.) \textbf{97} (1954), 1003--1006.

\bibitem {deP}J. de Pillis, 
Linear transformations which preserve hermitian and positive semidefinite operators, 
\href{https://doi.org/10.2140/pjm.1967.23.129}{
\textit{Pacific J. Math.} \textbf{23}
(1967), 129--137.}

\bibitem {DF}J. Derezynski and R. Fruboes, Fermi golden rule and open quantum
systems, In: 
\href{https://doi.org/10.1007/3-540-33967-1_2}{
S. Attal et al. (eds.) \emph{Open Quantum Systems III}, Lecture
Notes in Mathematics 1882, Berlin-Heidelberg-New York: Springer, 2006, pp. 67--116.}

\bibitem {Dir}P. A. M. Dirac, 
The conditions for statistical equilibrium between atoms, electrons and radiation
\href{https://doi.org/10.1098/rspa.1924.0088}{
\emph{Proc. Roy. Soc. London Ser. A} \textbf{106} (1924), 581--596.}

\bibitem {D}R. Duvenhage, Optimal quantum channels,
\href{https://doi.org/10.1103/physreva.104.032604}{
\emph{Phys. Rev. A}
\textbf{104} (2021), 032604.}

\bibitem {DSS}R. Duvenhage, S. Skosana and M. Snyman, 
Extending quantum detailed balance through optimal transport, 
\href{https://doi.org/10.1142/S0129055X24500405}{
\emph{Rev. Math. Phys.}, \textbf{37} (2025), 2450040.}

\bibitem {DS}R. Duvenhage and M. Snyman, Detailed balance and entanglement,
\href{https://doi.org/10.1088/1751-8113/48/15/155303}{
\emph{J. Phys. A} \textbf{48} (2015), 155303.}

\bibitem {DS2}R. Duvenhage and M. Snyman, 
Balance between quantum Markov semigroups, 
\href{https://doi.org/10.1007/s00023-018-0664-8}{
\textit{Ann. Henri Poincar\'{e}} \textbf{19} (2018), 1747--1786.}

\bibitem {FR10}F. Fagnola and R. Rebolledo, 
From classical to quantum entropy production, 
in: 
\href{https://doi.org/10.1142/9789814295437_0017}{
\textit{Proceedings of the 29th Conference on Quantum Probability and Related Topics}, 
edited by H. Ouerdiane and A. Barhoumi, p 245--261,
\textit{QP--PQ: Quantum Probability and White Noise Analysis, 25}, World Scientific,
Hackensack, NJ, 2010.}

\bibitem {FR}F. Fagnola and R. Rebolledo, 
Entropy production for quantum Markov semigroups, 
\href{https://doi.org/10.1007/s00220-015-2320-1}{
\textit{Comm. Math. Phys.} \textbf{335} (2015), 547--570.}

\bibitem {FR15b}F. Fagnola and R. Rebolledo, 
Entropy production and detailed balance for a class of quantum Markov semigroups,
\href{https://doi.org/10.1142/s1230161215500134}{
\textit{Open Syst. Inf. Dyn.} \textbf{22} (2015), 1550013.}

\bibitem {FU}F. Fagnola and V. Umanit\`{a}, 
Generators of KMS symmetric Markov semigroups on B(h) symmetry and quantum detailed balance,
\href{https://doi.org/10.1007/s00220-010-1011-1}{
\textit{Comm. Math. Phys.} \textbf{298} (2010), 523--547.}

\bibitem {FU2012}F. Fagnola and V. Umanit\`{a}, 
Quantum detailed balance conditions with time reversal: three-level systems,
\href{https://doi.org/10.1080/17442508.2010.518707}{
\textit{Stochastics} \textbf{84} (2012), 273--293.}

\bibitem {FU2012b}F. Fagnola and V. Umanit\`{a}, 
Quantum detailed balance conditions with time reversal: the finite-dimensional case,
\href{https://doi.org/10.4064/bc96-0-10}{
\textit{Noncommutative harmonic analysis with applications to probability III}, 159--174,
Banach Center Publ., 96,
Polish Academy of Sciences, Institute of Mathematics, Warsaw, 2012.}

\bibitem {Gar}C. W. Gardiner,
\textit{Handbook of stochastic methods: for physics, chemistry and the natural sciences},
Second edition, Springer Ser. Synergetics, 13, 
Springer-Verlag, Berlin, 1985.

\bibitem {GL}S. Goldstein and J.M. Lindsay, 
KMS-symmetric Markov semigroups,
\href{https://doi.org/10.1007/bf02572383}{
\textit{Math. Z.} \textbf{219} (1995), 591--608.}

\bibitem {Ha75}U. Haagerup, 
The standard form of von Neumann algebras,
\href{https://doi.org/10.7146/math.scand.a-11606}{
\textit{Math. Scand.} \textbf{37} (1975), 271--283.}

\bibitem {J}A. Jamio\l kowski, Linear transformations which preserve trace and
positive semidefiniteness of operators, 
\href{https://doi.org/10.1016/0034-4877(72)90011-0}{
\textit{Rep. Mathematical Phys.} \textbf{3} (1972), 275--278.}

\bibitem {Jen}A. Jenčová,
Reversibility conditions for quantum operations,
\href{https://doi.org/10.1142/s0129055x1250016x}{
\textit{Rev. Math. Phys.} \textbf{24} (2012), 1250016.}

\bibitem {JLF}M. Jiang, S. Luo and S. Fu, 
Channel-state duality, 
\href{https://doi.org/10.1103/physreva.87.022310}{
\textit{Phys. Rev. A}, \textbf{87} (2013) 022310.}

\bibitem {KFGV}A. Kossakowski, A. Frigerio, V. Gorini and M. Verri,
Quantum detailed balance and KMS condition,
\href{https://doi.org/10.1007/bf01625769}{
\textit{Commun. Math. Phys.} \textbf{57} (1977), 97--110.}

\bibitem {M}W. A. Majewski, 
The detailed balance condition in quantum statistical mechanics, 
\href{https://doi.org/10.1063/1.526164}{
\textit{J. Math. Phys.} \textbf{25} (1984), 614--616.}
	
\bibitem {MS}W. A. Majewski and R. F. Streater, 
Detailed balance and quantum dynamical maps, 
\href{https://doi.org/10.1088/0305-4470/31/39/013}{
\textit{J. Phys. A} \textbf{31} (1998), 7981--7995.}

\bibitem {Pet}D. Petz, 
A dual in von Neumann algebras with weights,
\href{https://doi.org/10.1093/qmath/35.4.475}{
\textit{Quart. J. Math. Oxford Ser. (2)} \textbf{35} (1984), 475--483.}

\bibitem {Pet86}D. Petz, 
Sufficient subalgebras and the relative entropy of states of a von Neumann algebra,
\href{https://doi.org/10.1007/bf01212345}{
\textit{Comm. Math. Phys.} \textbf{105} (1986), 123--131.}

\bibitem {Pet88}D. Petz, 
Sufficiency of channels over von Neumann algebras,
\href{https://doi.org/10.1093/qmath/39.1.97}{
\textit{Quart. J. Math. Oxford Ser. (2)} \textbf{39} (1988), 97--108.}

\bibitem {RLC}D. Roberts, A. Lingenfelter and A.A. Clerk, 
Hidden Time-Reversal Symmetry, Quantum Detailed Balance and Exact Solutions of Driven-Dissipative Quantum Systems, 
\href{https://doi.org/10.1103/prxquantum.2.020336}{
\textit{PRX Quantum} \textbf{2} (2021), 020336.}

\bibitem {S}J. Schwinger,
Unitary operator bases,
\href{https://doi.org/10.1073/pnas.46.4.570}{
\textit{Proc. Nat. Acad. Sci. U.S.A.} \textbf{46} (1960), 570--579.}

\bibitem {Sh}M. E. Shirokov, 
Reversibility of a quantum channel: general conditions and their applications to bosonic linear channels,
\href{https://doi.org/10.1063/1.4827436}{
\textit{J. Math. Phys.} 54 (2013), 112201.}

\bibitem {Diehl15}L. M. Sieberer, A. Chiocchetta, A. Gambassi, U. C.
T\"{a}uber and S. Diehl, 
Thermodynamic equilibrium as a symmetry of the Schwinger-Keldysh action, 
\href{https://doi.org/10.1103/physrevb.92.134307}{
\textit{Phys. Rev. B} \textbf{92} (2015), 134307.}

\bibitem {SQV}G. Stragier, J. Quaegebeur and A. Verbeure,  
Quantum detailed balance,
\textit{Ann. Inst. H. Poincaré Phys. Théor.} \textbf{41} (1984), 25--36.

\bibitem {T70}M. Takesaki, 
\textit{Tomita's theory of modular Hilbert algebras and its applications},
\href{https://doi.org/10.1007/bfb0065832}{
Lecture Notes in Math., Vol. 128, Springer-Verlag, Berlin-New York, 1970.}

\bibitem {T2}M. Takesaki, 
\textit{Theory of operator algebras. II},
\href{https://doi.org/10.1007/978-3-662-10451-4}{
Encyclopaedia of Mathematical Sciences, 125. Operator Algebras and
Non-commutative Geometry, 6. Springer-Verlag, Berlin, 2003.}

\bibitem {TKRWV}K. Temme, M. J. Kastoryano, M. B. Ruskai, M. M. Wolf and F. Verstraete, 
The  $\chi^2$-divergence and mixing times of quantum Markov processes,
\href{https://doi.org/10.1063/1.3511335}{
\textit{J. Math. Phys.} \textbf{51} (2010), 122201.}

\bibitem{Tol}R. C. Tolman, 
\textit{The principles of statistical mechanics},
Oxford at the Clarendon Press, 1938.
 
\bibitem{Ts1}M. Tsang,
Quantum reversal: a general theory of coherent quantum absorbers,
\href{https://doi.org/10.22331/q-2025-02-26-1650}{
\emph{Quantum} \textbf{9} (2025), 1650.}
 
\bibitem{Ts}M. Tsang,
Quantum Onsager relations,
\href{https://doi.org/10.1088/2058-9565/ad8513}{
\emph{Quantum Sci. Technol.} \textbf{10} (2025), 015015.}
 
\bibitem{V}A. Verbeure, Detailed balance and equilibrium, 
in:
\href{https://doi.org/10.1007/bfb0074497}{
\textit{Quantum Probability and Applications II}, 
edited by L. Accardi and W. von Waldenfels, p 504, Lecture Notes in Mathematics 1136, 
Springer-Verlag, Berlin, 1985.}
 
\bibitem{Vour}A. Vourdas, 
Quantum systems with finite Hilbert space,
\href{https://doi.org/10.1088/0034-4885/67/3/r03}{
\textit{Rep. Progr. Phys.} \textbf{67} (2004), no. 3, 267--320.}

\bibitem{Weyl}H. Weyl, 
\textit{The theory of groups and quantum mechanics},
reprint of the 1931 English translation, 
Dover Publications, Inc., New York, 1950. 

\bibitem{Wi}M. M. Wilde,
\textit{Quantum information theory},
second edition,
\href{https://doi.org/10.1017/9781316809976}{
Cambridge University Press, Cambridge, 2017.}

\end{thebibliography}
\end{document}